\title[Computing the Best Policy That Survives a Vote]{Computing the Best Policy That Survives a Vote}
\author{Andrei Constantinescu}
\affiliation{
  \institution{ETH Zurich}
  \city{Zurich}
  \country{Switzerland}}
\email{aconstantine@ethz.ch}
\author{Roger Wattenhofer}
\affiliation{
  \institution{ETH Zurich}
  \city{Zurich}
  \country{Switzerland}}
\email{wattenhofer@ethz.ch}
\begin{abstract}
An assembly of $n$ voters needs to decide on $t$ independent binary issues. Each voter has opinions about the issues, given by a $t$-bit vector. Anscombe's paradox shows that a policy following the majority opinion in each issue may not survive a vote by the very same set of $n$ voters, i.e., more voters may feel unrepresented by such a majority-driven policy than represented. A natural resolution is to come up with a policy that deviates a bit from the majority policy but no longer gets more opposition than support from the electorate.
We show that a Hamming distance to the majority policy of at most $\floor*{(t - 1) / 2}$ can always be guaranteed, by giving a new probabilistic argument relying on structure-preserving symmetries of the space of potential policies.
Unless the electorate is evenly divided between the two options on all issues, we in fact show that a policy strictly winning the vote exists within this distance bound. 
Our approach also leads to a deterministic polynomial-time algorithm for finding policies with the stated guarantees, answering an open problem of previous work. For odd $t$, unless we are in the pathological case described above, we also give a simpler and more efficient algorithm running in expected polynomial time with the same guarantees.
We further show that checking whether distance strictly less than $\floor*{(t - 1) /2}$ can be achieved is NP-hard, and that checking for distance at most some input $k$ is FPT with respect to several natural parameters.
\end{abstract}
\keywords{Computational Social Choice;
Judgment Aggregation;
Multiple Referenda;
Approval Ballots;
Computational Complexity;
Probabilistic Method;
Randomized Algorithms;
Anscombe's Paradox}
\newcommand{\BibTeX}{\rm B\kern-.05em{\sc i\kern-.025em b}\kern-.08em\TeX}
\DeclarePairedDelimiter{\floor}{\lfloor}{\rfloor}
\DeclarePairedDelimiter{\ceil}{\lceil}{\rceil}
\newcommand{\B}{\mathbb{B}}
\newcommand{\E}{\mathbb{E}}
\renewcommand{\P}{\mathbb{P}}  % Usually this is a weird character.
\newcommand{\Z}{\mathbb{Z}}
\newcommand{\calD}{\mathcal{D}}
\newcommand{\calP}{\mathcal{P}}
\newcommand{\Qm}{\text{`?'}} % For partial proposals.
\newcommand{\IWM}{{\sc IWM}\xspace}
\newcommand{\MajSup}{{\sc Win-Or-Tie-Prop}\xspace}
\newcommand{\UnaSup}{{\sc Una-Win-Prop}\xspace}
\newcommand{\Mod}[1]{\ (\mathrm{mod}\ #1)}
\newtheorem{theorem}{Theorem}
\newtheorem{lemma}[theorem]{Lemma}
\newtheorem{corollary}[theorem]{Corollary}
\newcommand{\drawcell}[2]{\draw[line width = 2.5mm, opacity = 0.2, color = {#2}] ({#1}.west) -- ({#1}.east);}
\newcommand{\drawline}[3]{\draw[line width = 2.5mm, opacity = 0.2, color = {#3}] ({#1}.west) -- ({#2}.east);}
\newcommand{\czero}{blue}
\newcommand{\cone}{red}
\newif\ifcomplver
\begin{document}

%%% The following commands remove the headers in your paper. For final 
%%% papers, these will be inserted during the pagination process.

\pagestyle{fancy}
\fancyhead{}

%%% The next command prints the information defined in the preamble.

\maketitle 

%%%%%%%%%%%%%%%%%%%%%%%%%%%%%%%%%%%%%%%%%%%%%%%%%%%%%%%%%%%%%%%%%%%%%%%%

\section{Introduction}

Once upon a time there was a country that was ruled by a good queen. The queen always asked her citizens about their opinions on the current issues in the country, in the form of yes-no questions. Then, she would form a policy following the majority opinion on each individual issue. Some of her people would be unhappy with the policy, because they disagreed with the policy more than they agreed. As long as these unhappy citizens were in a minority, the queen would not mind.

However, one day the queen noticed that the popular opinion policy would not be supported by a majority (e.g., Fig.~\ref{fig:anscombe_small}). Can the queen come up with a policy which is going to be supported by a majority of the people, while still respecting the opinion of the people as much as possible?
\begin{figure}[t]
\centering
\begin{tabular}{c|ccc}
          & $i_1$ & $i_2$ & $i_3$ \\
    \hline
    $v_1$ & \tikzmarknode{1_1}{1} & \tikzmarknode{1_2}{0} & \tikzmarknode{1_3}{0} \\
    $v_2$ & \tikzmarknode{2_1}{0} & \tikzmarknode{2_2}{1} & \tikzmarknode{2_3}{0} \\
    $v_3$ & \tikzmarknode{3_1}{0} & \tikzmarknode{3_2}{0} & \tikzmarknode{3_3}{1} \\
    $v_4$ & \tikzmarknode{4_1}{1} & \tikzmarknode{4_2}{1} & \tikzmarknode{4_3}{1} \\
    $v_5$ & \tikzmarknode{5_1}{1} & \tikzmarknode{5_2}{1} & \tikzmarknode{5_3}{1} %\\
    %    \hline
    %$p$ & \tikzmarknode{6_1}{1} & \tikzmarknode{6_2}{1} & \tikzmarknode{6_3}{1}
\end{tabular}
\begin{tikzpicture}[overlay,remember picture, shorten >=-3pt, shorten <= -3pt] % Shorten options make one end (and the other) of each line longer by 3pt.
\drawcell{1_1}{\cone}
\drawcell{1_2}{\czero}
\drawcell{1_3}{\czero}
\drawcell{2_1}{\czero}
\drawcell{2_2}{\cone}
\drawcell{2_3}{\czero}
\drawcell{3_1}{\czero}
\drawcell{3_2}{\czero}
\drawcell{3_3}{\cone}
\drawcell{4_1}{\cone}
\drawcell{4_2}{\cone}
\drawcell{4_3}{\cone}
\drawcell{5_1}{\cone}
\drawcell{5_2}{\cone}
\drawcell{5_3}{\cone}
\end{tikzpicture}
\caption{Anscombe's Paradox. Five voters (rows) express their opinions on three equally important issues (columns). On each issue, opinion $1$ is the majority opinion, so it would be natural to propose the policy $111$. However, policy $111$ would not survive a vote, since the first three voters would vote against the policy. Example due to \cite{ostrogorksi_example}.}
\label{fig:anscombe_small}
\Description{Anscombe's Paradox}
\end{figure}%

We believe that this problem is relevant in various natural contexts. For example, a board of directors might map out the complete matrix on how each director thinks about each current issue. Together they want to decide on the best policy. As much as possible, this policy should reflect the majority opinion on each issue, but it should also be agreeable to the board as a whole, in that it will have not more opposition than support.

More formally, we model policy proposals and voters' preferences using $t$-bit vectors, also known as approval ballots \cite{approval_book}. A voter supports a proposal if the Hamming distance between their vector and the policy vector is less than $t / 2$. The voter opposes a proposal if the distance is more than $t / 2$. If $t$ is even, and the distance is exactly $t / 2$, the voter abstains from voting. An issue-wise majority proposal is formed by taking the majoritarian opinion on each issue, breaking ties arbitrarily. Anscombe's Paradox \cite{anscombe}, illustrated in Figure \ref{fig:anscombe_small}, shows that issue-wise majority might not survive a final vote, even with no abstentions and each issue having a strict majority for one of the options, making the problem non-trivial and interesting to study.

\vspace{0.25cm}
\textbf{Our Contribution}. 
We give a probabilistic proof that a policy at distance at most $\floor*{(t - 1) / 2}$ from issue-wise majority which is not opposed by more people than supported always exists. This bound is tight, as shown in \cite{robin_price_of_majority}.
To achieve this, we devise two thought experiments:~one whose expected value is straightforward to compute, but is not immediately instrumental in proving our assertion, and one whose expected value can look daunting to compute, but a non-negative expectation would easily imply the conclusion. By observing voter-wise symmetries of the policy space, we show that the two experiments actually have the same expectation, implying the result. Moreover, unless for all issues the electorate is evenly split between the two options, we can in fact guarantee that support of the proposal strictly exceeds opposition.

Subsequently, we show that a policy satisfying our guarantees can be computed deterministically in polynomial time, by derandomizing our argument. Furthermore, for the odd $t$ case, aside from the pathological case above, an application of Markov's inequality shows that a simpler and more efficient algorithm achieves expected polynomial time:~proceed in rounds, at each round sampling and checking policies $(p_k)_{t / 2 < k \leq t}$ such that $p_k$ agrees with issue-wise majority in exactly $k$ places and is sampled uniformly at random among proposals with this property.

Additionally, we consider the question of determining the minimum distance to issue-wise majority that a policy surviving a vote can achieve. We show that it is NP-hard to decide even whether distance at most $\floor*{(t - 1) / 2} - 1$ is possible, even when both the number of voters and issues are odd. We also investigate the problem from a parameterized perspective, showing that it is tractable with respect to three natural parameters.

\textbf{Appendix}. After the main text, we provide an additional section showing that the space of policies surviving a vote can be, in a certain sense, highly disconnected.

\subsection{Related Work}

Alon et al.~\cite{noga_bredereck_rolf} consider the problem of finding a policy supported by a majority of the voters from the perspective of parameterized complexity. However, in their case a voter supports a policy if and only if they approve of a majority of the issues approved in the policy.
Our objective of minimizing distance to issue-wise majority is also not part of their formulation. Their model of which proposals are supported by a voter is a special case of ``threshold functions'', introduced and first studied computationally by Fishburn and Pekeč \cite{threshold_functions, threshold_functions2}. The search for compromise outcomes when issue-wise majority is defeated is studied by Laffond and Lainé \cite{paradoxes_explained}, where they introduce the concepts of majoritarian and approval compromises. The latter can be seen as a dual to our optimization objective:~instead of optimizing for the number of agreements with issue-wise majority, they optimize for the number of voters supporting the outcome. Elkind et al.~\cite{complexity_deliberative_coalition} perform an algorithmic study of coalition formation in a model similar to ours:~voters support policies which are closer to their vector than to a given ``status quo'' policy.

Voting in combinatorial domains, where the set of admissible outcomes is a subset of some cartesian product, is surveyed from a computational perspective by Lang and Xia \cite{combinatorial_domains}. In our case, this domain is the hypercube $\{0, 1\}^t.$ When issues are symmetric and independent, as in our case, the setup is known as \emph{multiple referenda}. The computational study of multiple referenda so far has predominantly concerned the lobbying problem, where a self-interested third-party wants to manipulate the outcome by changing ballots \cite{earlier_lobbying_in_referenda, lobbying_in_referenda, probabilistic_lobbying}. In \cite{agenda_control_in_referenda}, Conitzer et al.~consider a setup where issues are voted on sequentially, one at a time, and show that it is computationally demanding for a chairperson to influence the outcome by selecting the order the issues are presented in.

Judgement aggregation \cite{judgment_agg} is the generalization of multiple referenda to non-independent issues, and its computational study has been a recently active area of research. Slater's rule is one of the well-known aggregators considered in judgement aggregation, bearing a similar flavour to the topic of our work, in that it computes the closest to issue-wise majority logically-consistent proposal. However, for independent issues, it degenerates to just issue-wise majority, while in our case we additionally require that the selected proposal gets more support than opposition, which is a global constraint incompatible with Slater-like rules. Multiwinner elections are also often studied for approval ballots, leading to a number of prominent computational results \cite{survey_approval}, but they differ from multiple referenda fundamentally in that each issue is equivalent to a corresponding ``negated issue'', while there is no such concept as negating a candidate.

Anscombe's paradox has a number of different interpretations; e.g., read Figure \ref{fig:anscombe_small} as follows:~a government consists of three seats (columns), and two parties (0 and 1) compete. Each voter (row) has preferences for each seat, denoting whose party's nominee they prefer. In a direct democracy, voters vote on each seat, party 1's candidates winning all seats. However, when voting for a single-party government (representative democracy), more voters prefer party 0 over party 1, so party 0 wins all seats. This discrepancy between the outcomes of the two forms of democracy is illustrated through Anscombe's and other related \emph{compound majority paradoxes} \cite{nurmi_compound_paradoxes}.

A large body of literature sets to understand the conditions under which Anscombe's paradox and its generalization, the Ostrogorski Paradox \cite{ostrogorski}, occur and how they can be mitigated. Wagner \cite{wagner_1} observes the ``Rule of $3/4$'' (and later \cite{wagner_2} the generalized ``Rule of $1 - \alpha\beta$''), showing that the paradox requires many contested issues. In particular, if the average issue-wise majority margin is at least $3/4$, then Anscombe's paradox does not occur. This is one reason why certain high-stake votes, like constitutional amendments, require a $3/4$-majority to pass.
Similarly, Laffond and Lainé \cite{unanimity_and_anscombe} note that the paradox requires a non-cohesive group of voters. Namely, if the opinions of any two voters differ in less than a fraction of $\sqrt{2} - 1 \approx 41.4\%$ of the issues, then the paradox again does not occur (moreover, decreasing this constant gives higher guarantees on the fraction of voters agreeing with issue-wise majority). 
Deb and Kelsey \cite{deb_kelsey} give necessary and sufficient conditions in terms of combinations of the number of voters and issues that permit Anscombe's paradox:~unless either is small, the paradox is always possible; they also consider generalizations.
Gehrlein and Merlin \cite{probability_paradox} study the probability of the paradox occurring.
Laffond and Lainé \cite{single_switch} define the domain of single-switch preferences, which is the maximal domain avoiding Ostrogorski's paradox under a mild richness assumption. 
In essence, this constraint, similar to \emph{extremal interval} domains for approval voting \cite{elkind_ssc}, states that the issues can be ordered and possibly negated such that each voter approves of either a prefix or a suffix of issues.

%%%%%%%%%%%%%%%%%%%%%

Fritsch and Wattenhofer \cite{robin_price_of_majority} show non-constructively that in our setup a proposal that at most half the voters oppose always exists within distance $\floor*{(t - 1) / 2}.$ If there are no abstentions (odd $t$), this also implies that the opposition can not exceed the support, but with abstentions a problematic case would be when all but three voters abstain, the remaining three being split two on one against our proposal. We improve on their result by tackling both parities of $t$. Our main difference, however, is that we provide polynomial-time means of computing such policies.

%%%%%%%%%%%%%%%%%%%%%%%%%%%%%%%%%%%%%%%%%%%%%%%%%%%%%%%%%%%%%%%%%%%%%%%%%%%%%%%%%%%%%%%%%%%%%%%%%%%%%%%%%%%%%%%
\section{Preliminaries}
For any non-negative integer $m$, write $[m] = \{1, 2, \ldots, m\}$ and $\B = \{0, 1\}.$ An assembly of $n$ voters, numbered from $1$ to $n$, expresses preferences over $t$ independent binary \emph{issues} (also topics, or motions), numbered from $1$ to $t$, and collectively known as the \emph{agenda}. Each voter $i$'s opinions on the issues are represented as a vector (ballot) $v_i \in \B^t$, where $v_{i, j}$ is $1$ iff voter $i$ is in favor (approves) of motion $j.$ The matrix $\calP = (v_{i, j})_{i \in [n], j \in [t]}$ is called the voter judgements matrix, or the \emph{preference profile/matrix}. Write $d_H : \B^t \times \B^t \to \Z_{\geq 0}$ for the \emph{Hamming} distance $d_H(x, y) = \sum_{j = 1}^{t}|x_j - y_j|.$ A \emph{policy} (also policy proposal, or outcome) is an element $p \in \B^t;$ we write $\overline{p}$ for the \emph{opposite} policy; i.e., $p_j + \overline{p}_j = 1$ for all issues $j$.
Voter $i$ \emph{supports} (approves) $p$ if $d_H(v_i, p) < t/2$, \emph{opposes} (disapproves) $p$ if $d_H(v_i, p) > t/2,$ and is being \emph{indifferent} (abstains) for $p$ if $d_H(v_i, p) = t/2.$ Write $b_{v_i, p} = t - 2d_H(v_i, p)$ for the number of issues in which $v_i$ and $p$ match minus the number of issues in which $v_i$ and $p$ mismatch. Then, $i$ approves $p$ if $b_{v_i, p} > 0$, disapproves $p$ if $b_{v_i, p} < 0$, and is indifferent if $b_{v_i, p} = 0$.

For a preference profile $\calP$ and a proposal $p$ we define the \emph{for-against balance} $b_{p, \calP} = a_{p, \calP} - d_{p, \calP}$ of $p$ under $\calP$ to be the number $a_{p, \calP}$ of voters supporting $p$ minus the number $d_{p, \calP}$ of voters opposing $p$. As such, we say that a policy $p$ wins/ties/loses the vote if $b_{p, \calP}$ is positive/zero/negative, $p$ being a \emph{winning}/\emph{tying}/\emph{losing} proposal. Note that $p$ survives the vote if it is not losing. A proposal $p$ is \emph{unanimously} winning if $b_{p, \calP} = n.$ For use later on, any ballot (i.e., vote or policy) $p \in \B^t$ can be seen as a function $p : [t] \to \B$, or as a set $p \subseteq [t]$ of approved issues; let $|p|$  be the set cardinality and $b_p = |p| - (t - |p|) = 2|p| - t$ the difference between the number of approved and disapproved issues in $p.$

Given a preference profile $\calP$, the \emph{issue-wise majority policy} (short \IWM) is the policy $p \in \B^t$ such that $p_j$ is $1$ if more than $n / 2$ voters are in favour of motion $j$, is $0$ if less than $n / 2$ voters are, and can be either in case of a tie. For our purposes, we arbitrarily choose the value $1$ in case of equality. Note that any profile $\calP$ where \IWM returns $0$ for some issues can be turned into a logically equivalent profile $\calP'$ by flipping ones and zeros for such issues. This does not modify the electoral landscape because we are just logically negating some of the issues, and all issues are independent; voters should naturally answer by negating their original answers. Henceforth, we assume without loss of generality that all profiles that we consider have all-ones as their \IWM outcome. 

We study the problem of finding a non-losing proposal $p$ which minimizes distance to \IWM; i.e., maximizes $|p|.$ More formally, let \MajSup be the decision problem ``Given a preference profile $\calP$ with \IWM being all-ones and a number $k$, does there exist a non-losing proposal $p$ with $|p| \geq k$?'' Similarly, define $k$-\MajSup, where the value of $k$ is a fixed integer function of $t$; e.g., $(\floor*{t / 2} + 2)$-\MajSup. Additionally, we define the analogous problem \UnaSup, where the sought proposal $p$ has to win unanimously; i.e., be supported by everyone; and also $k$-\UnaSup, similarly.

For odd $t$, Fritsch and Wattenhofer \cite{robin_price_of_majority} show that the answer to $(\floor*{t / 2} + 1)$-\MajSup is always yes, which they show to be tight for all $t$:

\begin{lemma} \label{lemma_gadget} Consider profiles $\calP_t$ with $n = 2t - 1$ voters, where $v_i = \{i\}$ for $i \leq t$ and $v_i = [t]$ for $i > t$. Then, a proposal $p$ with $|p| \geq \floor*{t / 2} + 2$ satisfies $b_{p, \calP_t} = -1.$ Profile $\calP_3$ is illustrated in Figure \ref{fig:anscombe_small}.
\end{lemma}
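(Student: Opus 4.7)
The claim is a direct computational one, so I would handle it by case analysis on each voter. Fix any proposal $p$ with $k = |p| \geq \floor*{t/2} + 2$. The electorate of $\calP_t$ splits naturally into $t$ \emph{singleton voters} (indices $i \leq t$, with $v_i = \{i\}$) and $t - 1$ \emph{full voters} (indices $i > t$, with $v_i = [t]$), and my strategy is to compute $b_{v_i, p}$ explicitly for each group and then add up.

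For a full voter, $v_i$ and $p$ agree on exactly the $k$ positions in $p$ and disagree on the other $t-k$, giving $b_{v_i, p} = 2k - t$; since $k \geq \floor*{t/2} + 2$, this is at least $3$ (for odd $t$) or $4$ (for even $t$), so every full voter supports $p$ and they contribute $t - 1$ to the support count. For a singleton voter $i$, I would split on whether $i \in p$ or not: the agreement count is $(t - k) + 1$ in the first case (match on position $i$ and on the $t - k$ zeros of $p$) and $t - k - 1$ in the second (match only on the $t-k-1$ zeros of $p$ other than position $i$), yielding $b_{v_i, p} = t - 2k + 2$ and $b_{v_i, p} = t - 2k - 2$ respectively. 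Plugging in $k \geq \floor*{t/2} + 2$ shows that both quantities are strictly negative for either parity of $t$, so every singleton voter opposes $p$, regardless of whether $i \in p$.

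Summing the two contributions yields $b_{p, \calP_t} = (t - 1) - t = -1$, as claimed. The only place where care is needed is in carrying the floor through the inequalities, but since the bound on $k$ leaves a margin of at least one in both parities, there is no real obstacle; the entire argument reduces to bookkeeping, and the profile $\calP_3$ of Figure~\ref{fig:anscombe_small} can be used as a sanity check ($n = 5$, three singletons and two all-ones rows, with the unique $p$ of size $\geq 3$ having balance $2 - 3 = -1$).
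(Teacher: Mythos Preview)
Your proposal is correct and follows exactly the same approach as the paper's proof sketch: you show that all $t-1$ full voters support $p$ while all $t$ singleton voters oppose it, yielding $b_{p,\calP_t} = (t-1) - t = -1$. You simply provide the explicit computations of $b_{v_i,p}$ and the parity checks that the paper leaves implicit.
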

\begin{proof}[Proof Sketch] The last $t - 1$ voters always approve of $p$, while the first $t$ always disapprove of $p$, so $b_{p, \calP_t} = (t - 1) - t = -1$.
\end{proof}

\section{Finding Non-Losing Proposals is Hard}

In this section we show that $(\floor*{t / 2} + 2)$-\MajSup is NP-hard and as a corollary that $k$-\MajSup is NP-hard for $k \geq (\floor*{t / 2} + 2),$ unless $k$ is subpolynomially close to $t.$ We then show that \MajSup is tractable with respect to three natural parameters.

\begin{theorem} \label{th_maj_reduces_to_una} For any function $k \geq \floor*{t / 2} + 2$, there is a polynomial Karp reduction from $k$-\UnaSup to $k$-\MajSup.
\end{theorem}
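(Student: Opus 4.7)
The plan is to use the gadget profile $\calP_t$ from Lemma~\ref{lemma_gadget} as an amplifier that forces any non-losing policy to already be unanimously supported in the original profile. Given an instance $(\calP, k)$ of $k$-\UnaSup with $n$ voters and $t$ issues, I would construct the $k$-\MajSup instance $(\calP', k)$ over the same $t$ issues by concatenating $\calP$ with exactly $n$ disjoint copies of $\calP_t$. The construction is clearly polynomial: $\calP'$ has $n + n(2t-1) = 2nt$ voters, all on the same agenda of $t$ issues.

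For correctness, I would do a simple balance computation. For any policy $p$ with $|p| \geq k \geq \floor*{t/2} + 2$, Lemma~\ref{lemma_gadget} yields $b_{p, \calP_t} = -1$, so by additivity of the balance across disjoint voter groups, $b_{p, \calP'} = b_{p, \calP} - n$. Hence $p$ is non-losing in $\calP'$ iff $b_{p, \calP} \geq n$, and since $b_{p, \calP} \leq n$ always holds, this is equivalent to $b_{p, \calP} = n$; i.e., to $p$ being unanimously winning in $\calP$. This gives the desired equivalence of yes-instances between $(\calP, k)$ and $(\calP', k)$.

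The last thing to verify is that $\calP'$ is a well-formed input of $k$-\MajSup, namely that its \IWM is all-ones. Each copy of $\calP_t$ contributes $t$ approvals per issue out of its $2t-1$ voters. Thus each issue of $\calP'$ receives at least $\ceil*{n/2} + nt$ approvals out of $2nt$ total voters, using that the \IWM of $\calP$ is all-ones to lower-bound its contribution. Since $\ceil*{n/2} \geq 1$ for $n \geq 1$, this exceeds $nt$, so each issue has a strict majority in $\calP'$ and \IWM outputs $1$ everywhere.

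There is no deep obstacle here; the technical point is only that the multiplicity $n$ must be chosen exactly right. Using fewer than $n$ copies of $\calP_t$ would admit non-losing policies that are merely ``almost'' unanimous in $\calP$, breaking the equivalence. The hypothesis $k \geq \floor*{t/2} + 2$ is precisely what licenses the invocation of Lemma~\ref{lemma_gadget} on every candidate policy satisfying $|p| \geq k$; without it, the gadget would not reliably contribute $-1$ per copy to the balance and the reduction would collapse.
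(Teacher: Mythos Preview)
Your proposal is correct and follows essentially the same approach as the paper: concatenate $n$ copies of the gadget $\calP_t$ to the original profile, then use Lemma~\ref{lemma_gadget} to turn the non-losing condition $b_{p,\calP'} \geq 0$ into the unanimous-win condition $b_{p,\calP} = n$. Your verification that \IWM remains all-ones in $\calP'$ is a bit more explicit than the paper's, which just notes that each column of the gadget has more ones than zeros, but the substance is identical.
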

\begin{proof}
Consider an $n \times t$ instance $\calP$ of $k$-\UnaSup. From $\calP$ we construct in polynomial time a new instance $\calP'$ by vertically concatenating $n$ copies of the gadget $G = \calP_t$ defined in Lemma \ref{lemma_gadget}. Formally, $\calP' = \calP \mid G_1 \mid \ldots \mid G_n$, where $\mid$ denotes vertical concatenation of preference matrices and $(G_i)_{i \in [n]}$ are copies of $G.$ To show correctness, let $p$ be a proposal with $|p| \geq k.$ 
Proposal $p$ is unanimously winning in $\calP$ iff $b_{p, \calP} = n$, or equivalently $b_{p, \calP} \geq n.$ This is equivalent, by adding $n \cdot b_{p, G}$ to both sides, to $b_{p, \calP} + n \cdot b_{p, G} \geq n + n\cdot b_{p, G}$, which is the same as $b_{p, \calP'} \geq 0$ by Lemma \ref{lemma_gadget}. Crucially, observe that \IWM is also all-ones for $\calP'$, since each column of $G$ has more ones than zeros.
\end{proof}

\begin{theorem} \label{th_una_half_is_np_hard} $(\floor*{t / 2} + 2)$-\UnaSup is NP-hard even when restricted to the odd $n$, odd $t$ case or the even $n$, odd $t$ case.
\end{theorem}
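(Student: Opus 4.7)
The plan is to reduce from a canonical NP-hard covering problem such as Hitting Set or Vertex Cover. The key structural observation is that, since \IWM is all-ones and we insist $|p| \geq \floor*{t/2} + 2 = (t+3)/2$ for odd $t$, the sought policy $p$ must have at most $(t-3)/2$ zero coordinates; hence specifying $p$ amounts to choosing a small ``veto'' set $Z \subseteq [t]$. Writing $\bar{v}_i$ for a voter's zero-set, the identity $d_H(v_i, p) = |\bar{v}_i| + |Z| - 2|\bar{v}_i \cap Z|$ turns the unanimity requirement $d_H(v_i, p) \leq (t-1)/2$ into the coverage condition $2|\bar{v}_i \cap Z| \geq |\bar{v}_i| + |Z| - (t-1)/2$ for every voter. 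Tuning $|\bar{v}_i|$ uniformly---for instance, all roughly $(t-1)/2$---makes this condition equivalent to ``$Z$ intersects $\bar{v}_i$ in at least a prescribed number of places,'' which is exactly the kind of constraint that encodes a covering problem.

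Concretely, I would treat each element of the source instance's ground set as one ``real'' issue and each constraint set as one voter whose zero-coordinates are that set together with carefully chosen balancing zeros inside a disjoint pool of auxiliary issues. A small block of gadget voters, in the spirit of Lemma~\ref{lemma_gadget}, would be appended to pin $|Z|$ in any unanimously winning proposal to exactly the covering budget, making the reduction rigid. Parities and the \IWM-all-ones structure are handled through padding: sufficiently many all-ones ``cheerleader'' issue-columns secure \IWM and let me toggle the parity of $t$, while the parity of $n$ is adjusted by adding one or two all-ones voters, who trivially approve every admissible $p$ and so do not interfere with unanimity. This addresses both the odd-$n$/odd-$t$ and even-$n$/odd-$t$ regimes.

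The main obstacle is the rigidity of the encoding: because unanimity is a strong \emph{global} constraint, I must simultaneously ensure that (i) the coverage conditions actually bite for every voter regardless of the eventual choice of $Z$, (ii) the rigidity gadget forces $|Z|$ to the intended value without killing all candidate veto sets, and (iii) \IWM remains all-ones in spite of the many zeros introduced by the constraint voters. Calibrating the three additive ingredients---voter zero-counts, rigidity gadget size, and cheerleader counts---is the delicate part, since each of these knobs shifts the threshold $(t-1)/2$ through its effect on $t$. Once the calibration is pinned down, a routine correspondence argument shows that the constructed profile admits a unanimously winning proposal of weight at least $(t+3)/2$ iff the source covering instance is a yes-instance, yielding NP-hardness; since the padding blocks can be sized freely, both parity restrictions on $n$ are covered by the same construction.
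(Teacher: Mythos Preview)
Your proposal lays out a plausible strategy but stops precisely where the proof begins. The Hamming-distance rewriting is correct, and reducing from a covering problem is natural; yet you yourself flag the joint calibration of voter zero-counts, gadget size, and padding as ``the delicate part'' and then leave it undone. This is not cosmetic: your coverage inequality $2|\bar v_i\cap Z|\ge |\bar v_i|+|Z|-(t-1)/2$ carries $|Z|$ on the right-hand side, so unless $|Z|$ is first pinned to the budget $k$, taking $Z=\varnothing$ vacuously satisfies every such constraint and the reduction collapses. You invoke Lemma~\ref{lemma_gadget} to perform this pinning, but that gadget manufactures an aggregate for--against deficit of $-1$---a majority-voting device with no bite under unanimity, where each voter must approve individually. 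In fact each of the $t$ single-issue voters in $\calP_t$ \emph{disapproves} of every $p$ with $|p|\ge\floor{t/2}+2$, so appending the gadget to a \UnaSup instance kills all admissible proposals outright. You have not supplied a replacement mechanism for forcing $|Z|\ge k$ (and for confining $Z$ to the real issues), and without one there is no reduction.

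The paper's proof proceeds quite differently. It reduces from \textsc{Independent Set}, passes to the $\{\pm1\}^t$ encoding so that each voter's approval condition becomes a linear inequality $\sum_j c_jp_j>0$ with $c\in\{\pm1\}^t$, and then exploits the freedom to add voters in carefully designed \emph{pairs} whose inequalities sum to isolate a single intended constraint (two voters force an auxiliary $\alpha=1$; four force $x_i=x_i'$; one per edge encodes the independence constraint; one encodes the size bound). This additive pairing trick is the engine of the construction: it simultaneously nails down all auxiliary variables, mirrors vertex variables, encodes edges, and imposes the cardinality bound, after which \IWM is restored by appending further copies of already-present constraint rows. Your plan has no analogue of this device, and the calibration you defer is exactly what such a device would have to accomplish.
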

\begin{proof} Before proceeding, we make a few observations and notational conveniences. Namely, we see votes $v_i$ as sets $v_i \subseteq [t]$, and we see potential proposals $p$ as vectors in $\{\pm 1\}^t$, with the correspondence $0 \mapsto -1, 1 \mapsto 1$. Under these assumptions, a voter with vote $v$ approves of proposal $p$ if and only if $\sum_{i \in v} p_i - \sum_{i \notin v} p_i > 0$.
Note that possible inequalities induced by a voter, being of the form $\sum_{i \in v} p_i - \sum_{i \notin v} p_i > 0$, correspond bijectively with inequalities of the form $\sum_{i}c_ip_i > 0$, where $c \in \{\pm 1\}^t$.

Armed as such, we proceed by reduction from Independent Set, similarly to \cite{complexity_deliberative_coalition}. Assume we have a graph $G = (V, E)$, with $|V| = n$ vertices and $|E| = m$ edges, and a number $k$. We will show how to construct a profile $\calP$ in polynomial time such that $G$ has an independent set of size at least $k$ if and only if there is a proposal $p$ with $|p| \geq \floor*{t / 2} + 2$. 

For the agenda, we introduce two issues per vertex of $V$, namely $x_1, x_1', x_2, x_2', \ldots, x_n, x_n'$. Moreover, for some $\ell$ to be chosen later, we also introduce issues $a_1, a_1', a_2, a_2', \ldots, a_\ell, a_\ell'$. Finally, we introduce an additional issue $a_0$. For convenience, let $A = \{a_0, a_1, a_1'$, $ a_2, a_2', \ldots, a_\ell, a_\ell'\}.$

Votes to be added as rows of $\calP$ will be presented in the linear inequality notation from above. We add to $\calP$ the following inequalities:

\vspace{0.25cm}
\textbf{Set 1}. We enforce that for all $\alpha \in A$ it holds that $\alpha = 1$. To do this, write $A$ arbitrarily as $A = \{\alpha\} \cup A_0 \cup A_1$, where $|A_0| = |A_1| = \ell$ and add votes corresponding to the following inequalities:
\begin{gather}
    \alpha + \sum_{y \in A_0}y - \sum_{y \in A_1}y + \sum_{i \in [n]}(x_i - x_i') > 0 \label{eq_a_1_1} \\
    \alpha - \sum_{y \in A_0}y + \sum_{y \in A_1}y - \sum_{i \in [n]}(x_i - x_i')  > 0 \label{eq_a_1_2}
\end{gather}
Note that adding together the two inequalities gives $2\alpha > 0$, which happens iff $\alpha = 1$. However, having $\alpha = 1$ for all $\alpha \in A$ is so far not enough to guarantee that all inequalities in \textbf{Set 1} hold.

\vspace{0.25cm}
\textbf{Set 2}. We enforce that for all $i \in [n]$ it holds that $x_i = x_i'$. To do this, we add votes corresponding to the following inequalities:
\begin{gather}
x_i - x_i' + \sum_{j \in [n] \setminus \{i\}}(x_j - x_j') + a_0 + \sum_{j \in [\ell]} (a_j - a_j') > 0 \label{eq_x_i_x_i_prime_1}\\
x_i - x_i' - \sum_{j \in [n] \setminus \{i\}}(x_j - x_j') + a_0 - \sum_{j \in [\ell]} (a_j - a_j') > 0 \label{eq_x_i_x_i_prime_2}
\end{gather}
Note that adding together the two inequalities gives $2(x_i - x_i' + a_0) > 0$, which is equivalent to $x_i - x_i' + a_0 > 0$, which is the same as $x_i - x_i' \geq 0$ since $a_0 = 1$ holds in the presence of \textbf{Set 1}.
By changing the signs of $x_i$ and $x_i'$ in \eqref{eq_x_i_x_i_prime_1} and \eqref{eq_x_i_x_i_prime_2} we get analogous inequalities \eqref{eq_x_i_x_i_prime_3} and \eqref{eq_x_i_x_i_prime_4} enforcing that $x_i' - x_i \geq 0$, which we also add to $\calP$:
\begin{gather}
x_i' - x_i + \sum_{j \in [n] \setminus \{i\}}(x_j - x_j') + a_0 + \sum_{j \in [\ell]} (a_j - a_j') > 0 \label{eq_x_i_x_i_prime_3}\\
x_i' - x_i - \sum_{j \in [n] \setminus \{i\}}(x_j - x_j') + a_0 - \sum_{j \in [\ell]} (a_j - a_j') > 0 \label{eq_x_i_x_i_prime_4}
\end{gather}
Altogether, we get that $x_i = x_i'$. Now, assuming both \textbf{Set 1} and \textbf{Set 2} have been added, we can see that not only they imply that $\alpha = 1, \forall \alpha \in A$ and $x_i = x_i', \forall i \in [n]$, as we have shown, but also the other way around. In particular, substituting $\alpha = 1, \forall \alpha \in A$ and $x_i = x_i', \forall i \in [n]$ into \eqref{eq_a_1_1}--\eqref{eq_x_i_x_i_prime_4} will each time yield $1 > 0$, so \textbf{Sets 1} and \textbf{2} are satisfied if and only if $\alpha = 1, \forall \alpha \in A$ and $x_i = x_i', \forall i \in [n]$.

\vspace{0.25cm}
\textbf{Set 3}. We enforce the independent set constraints. For each edge $(a, b) \in E$ we want to enforce the constraint that $x_a + x_b < 1$, which is equivalent over the integers to $x_a + x_b < 0.5$. This is the same as $2(x_a + x_b) < 1$, or $x_a + x_b + x_a' + x_b' < 1$, equivalently. Multiplying with $-1$, this is the same as $-x_a - x_b - x_a' - x_b' > -1.$ Finally, this is equivalent to $-x_a -x_b - x_a' - x_b' + a_0 > 0$. Introducing 0 terms to ensure variables are used exactly once, we get the equivalent form, which we add to $\calP$:
\begin{gather}
    -x_a - x_b - x_a' - x_b' + a_0 + \nonumber \\ \sum_{j \in [n] \setminus \{a, b\}}(x_j - x_j') + \sum_{j \in [\ell]} (a_j - a_j') > 0 \label{eq_edge}
\end{gather}

\vspace{0.25cm}
\textbf{Set 4}. Finally, let us enforce the constraint that the independent set has size at least $k$. To do this, note that:
\begin{gather}
    \sum_{i \in [n]:x_i = 1} x_i \geq k \iff 2\sum_{i \in [n]:x_i = 1} x_i \geq 2k \nonumber \\
    \iff \sum_{i \in [n]:x_i = 1}x_i + \left(n + \sum_{i \in [n]:x_i = -1}x_i\right) \geq 2k \nonumber \\
    \iff \sum_{i \in [n]}x_i \geq 2k - n \nonumber \\
    %\iff \sum_{i \in [n]}(x_i + x_i') \geq 4k - 2n \\
    %\iff \sum_{i \in [n]}(x_i + x_i') > 4k - 2n - 1 \\
    \iff \sum_{i \in [n]}(x_i + x_i') + a_0 > 4k - 2n \label{eq_same_value}
\end{gather}
The last line can be written as $\sum_{i \in [n]}(x_i + x_i') + a_0 + 2(n - 2k) > 0$, so, as long as $\ell \geq \lvert n - 2k \rvert$, we can write $2(n - 2k)$ as a linear $\pm 1$ combination of the variables in $A \setminus \{a_0\}$, thus getting an inequality that uses all variables exactly once, which we can then add to $\calP$.

These being said, let us now investigate the (implicit in the definition of $(\floor*{t / 2} + 2)$-\UnaSup) constraint that at least $\floor*{t / 2} + 2$ of the variables in the sought outcome have to take value $1$. In our case, $t = 2n + 2\ell + 1.$ Like before, the first step is to translate from the number of ones to the sum of the variables. Doing so, at least $\floor*{t / 2} + 2$ ones in the outcome is equivalent to:
\begin{gather*}
    \sum_{i \in [n]}{(x_i + x_i')} + a_0 +  \sum_{i \in [\ell]}{(a_i + a_i')} \geq 2(\floor*{t / 2} + 2) - (2n + 2\ell + 1) \\
    %-%-%\iff \sum_{i \in [n]}{(x_i + x_i')} + a_0 +  \sum_{i \in [\ell]}{(a_i + a_i')} \geq 2n + 2\ell + 4 - (2n + 2\ell + 1) \\
    %-%-%\iff \sum_{i \in [n]}{(x_i + x_i')} + a_0 + 2\ell \geq 3 \\
    \iff \sum_{i \in [n]}{(x_i + x_i')} + a_0 \geq 3 - 2\ell
\end{gather*}
The quantity in the left-hand side is the same as in \eqref{eq_same_value}, and note that $4k - 2n \geq 3 - 2\ell \iff \ell \geq n + 3/2 - 2k$. When this happens, the inequality in \textbf{Set 4} is stronger than the constraint on the total number of ones, so we can ignore the constraint on the number of ones. Therefore, assuming that $k \geq 1$, we can set $\ell = n$ such that both $\ell \geq \lvert n - 2k \rvert$ and $4k - 2n \geq 3 - 2\ell$ hold true.

There is one more thing to take care of:~we need to ensure that in the so-constructed judgements matrix $\calP$ every column has more ones than zeros, and this is not yet true for our construction. To mitigate the issue, note the following for constraint \textbf{Sets 4} and \textbf{1}:

\textbf{Set 1}. In Equations \eqref{eq_a_1_1} and \eqref{eq_a_1_2} every variable appears once positive and once negative, except for $\alpha$ itself, which appears twice positive. This being said, if at the end we are left with a matrix violating the required property in some column corresponding to a variable in $A$, we can add copies of the corresponding inequalities in \textbf{Set 1} polinomially many times until such violations are no longer present, since each added copy only changes the one-zero balance of a single column, increasing it.   

\textbf{Set 4}. Here only variables in $A$ ever appear negative. Therefore, if at some point we have a matrix violating the property for some variable $x_i$ or $x_i'$, then we add an additional copy of \textbf{Set 4}, hence increasing the one-zero balance of variables $x_i$ and $x_i'$ by 1. After polynomially many such additions, we reach a matrix satisfying the property for all columns corresponding to vertices in the graph.

These being said, our reduction can now be summarized as follows:~first, we build a judgements matrix $\calP$ as described, then, we perform the transformation described for \textbf{Set 4} above until all variables of the form $x_i$ and $x_i'$ have positive one-zero balances on their columns, and finally we perform the transformation described for \textbf{Set 1} above until also all columns corresponding to variables in $A$ have positive one-zero balances. Observe that the produced instance has $t = 2n + 2\ell + 1,$ which is odd. To control the parity of the number of voters, we can add one more copy of \textbf{Set 4} in the post-processing stage if needed, as the subsequent transformations using \textbf{Set 1} do not change the parity of the number of voters. 
\end{proof}

Putting together Theorem \ref{th_maj_reduces_to_una} and Theorem \ref{th_una_half_is_np_hard} we get our main result as the following:

\begin{theorem} \label{th_main_np_hardness_result} $(\floor*{t / 2} + 2)$-\MajSup is NP-hard even when restricted to the odd $n$, odd $t$ case or the even $n$, odd $t$ case.\footnote{For the odd $n$, odd $t$ case we actually need to use $\calP' = \calP \mid G_1 \mid \ldots \mid G_{n - 1}$ in the proof of Theorem \ref{th_maj_reduces_to_una}, but the details are analogous.}
\end{theorem}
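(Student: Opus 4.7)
The plan is to compose Theorem~\ref{th_maj_reduces_to_una} with Theorem~\ref{th_una_half_is_np_hard}: the latter already establishes NP-hardness of $(\floor*{t/2}+2)$-\UnaSup in both required parity regimes (odd $n$/odd $t$ and even $n$/odd $t$), and the former Karp-reduces $k$-\UnaSup to $k$-\MajSup for $k \geq \floor*{t/2}+2$. The only nontrivial work is checking that each parity restriction can be preserved by the reduction, since the concatenation with gadgets $\calP_t$ alters the voter count.

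For the even $n$, odd $t$ case of \MajSup, I would apply the reduction exactly as stated in Theorem~\ref{th_maj_reduces_to_una}: concatenate $n$ copies of $\calP_t$ to an NP-hard \UnaSup{} instance (with odd $t$ and either parity of $n$). The number of issues stays at $t$ (odd), while the voter count becomes $n + n(2t-1) = 2nt$, which is even regardless of the original parity of $n$. Hence NP-hardness transfers with the desired parities.

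For the odd $n$, odd $t$ case, as the footnote suggests, I would start from an instance of $(\floor*{t/2}+2)$-\UnaSup{} with both $n$ and $t$ odd and use $\calP' = \calP \mid G_1 \mid \ldots \mid G_{n-1}$ (only $n-1$ gadget copies). This produces $n + (n-1)(2t-1) = 2t(n-1) + 1$ voters (odd) and $t$ issues (odd). Correctness requires a small parity argument: for any $p$ with $|p| \geq \floor*{t/2}+2$, Lemma~\ref{lemma_gadget} gives $b_{p, \calP'} = b_{p, \calP} - (n-1)$, so $p$ non-losing in $\calP'$ is equivalent to $b_{p, \calP} \geq n-1$. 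The key observation is that when $t$ is odd, $b_{v_i, p} = t - 2 d_H(v_i, p)$ is odd for every voter, so nobody abstains; thus $b_{p, \calP}$ has the same parity as the total voter count $n$, which is odd. Since $n-1$ is even and $b_{p, \calP}$ is odd with $b_{p, \calP} \leq n$, the inequality $b_{p, \calP} \geq n-1$ collapses to $b_{p, \calP} = n$, i.e., $p$ wins unanimously in $\calP$. The all-ones \IWM{} property is preserved since every column of $\calP_t$ has strictly more ones than zeros.

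The main obstacle is exactly this parity bookkeeping for the odd-$n$ branch: a naive reuse of Theorem~\ref{th_maj_reduces_to_una} with $n$ gadget copies always yields an even-$n$ \MajSup{} instance, and the fix requires both the $(n-1)$-copy variant and the no-abstentions observation that is specific to odd $t$. Once this is in place, the theorem follows immediately.
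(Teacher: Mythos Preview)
Your proposal is correct and follows exactly the approach of the paper: compose Theorem~\ref{th_maj_reduces_to_una} with Theorem~\ref{th_una_half_is_np_hard}, and for the odd-$n$ branch use $n-1$ gadget copies as the footnote indicates. In fact you spell out the parity bookkeeping (the no-abstentions observation for odd $t$ forcing $b_{p,\calP}\geq n-1$ to collapse to $b_{p,\calP}=n$) that the paper merely dismisses as ``analogous.''
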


However, maybe checking for the existence of a policy with more agreements is not as difficult? After all, checking for distance at most any given constant can be done in polynomial time. The answer is mostly negative, which we show in the following more technical form of Theorem \ref{th_una_half_is_np_hard}:

\begin{theorem} \label{th_np_hard_stronger_form} $k$-\UnaSup is NP-hard if $t - k$ is $\Omega(t^\epsilon)$ for some constant $0 < \epsilon \leq 1$. 
\end{theorem}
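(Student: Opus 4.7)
The plan is to revisit the reduction from Independent Set used in the proof of Theorem~\ref{th_una_half_is_np_hard} and expose a flexibility in it that was not exploited there. Observe that, in that construction, Sets 1--3 alone already force any unanimously winning proposal $p$ to encode an independent set $S = \{i : x_i = 1\}$, with $|p| = 2\ell + 1 + 2|S|$; Set 4 is then used to enforce $|S| \geq k$ for the IS target $k$. However, one may equivalently drop Set 4 and instead use the $k$-\UnaSup target $k^\star := 2k + 2\ell + 1$, since then $|p| \geq k^\star \iff |S| \geq k$. Crucially, the gap now satisfies $t - k^\star = 2(n - k)$, and depends only on the underlying IS instance rather than on $\ell$. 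Hence we can freely inflate $t$ by enlarging $\ell$ without shrinking the gap.

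Given $\epsilon \in (0, 1)$ (the case $\epsilon = 1$ is already covered by Theorem~\ref{th_main_np_hardness_result}), fix an NP-hard Independent Set instance with $n$ vertices and target $k = \lfloor n/2 \rfloor$; such instances remain NP-hard by taking the standard $3$-SAT-to-IS reduction and padding the graph with isolated vertices to bring $k/n$ to $1/2$. Applying the modified reduction with $\ell := \lceil (n^{1/\epsilon} - 2n - 1)/2 \rceil$, which is a non-negative integer polynomial in $n$ for any fixed $\epsilon > 0$, yields a $k^\star$-\UnaSup instance with $t = \Theta(n^{1/\epsilon})$ issues and gap $t - k^\star = 2(n - \lfloor n/2 \rfloor) = \Theta(n) = \Theta(t^\epsilon)$, which is $\Omega(t^\epsilon)$ as required. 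To match a prescribed $k(t)$ with $t - k(t) = \Omega(t^\epsilon)$ on the nose, one further tunes $\ell$ or pads the IS instance with extra isolated vertices, solving the integer equation $2(n - k) = t - k(t)$ for the $t$ produced; no essentially new ideas are involved.

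The main obstacle is a bookkeeping check that the technical conditions from the proof of Theorem~\ref{th_una_half_is_np_hard} survive the modification. The side conditions $\ell \geq |n - 2k|$ (needed for Set 4's inequality to be expressible with $\pm 1$ coefficients) and $\ell \geq n + \tfrac{3}{2} - 2k$ collapse to $\ell \geq O(1)$ when $k = \lfloor n / 2 \rfloor$, and both are trivially met by our much larger value of $\ell$. The post-processing that enforces \IWM to be all-ones and controls the parity of the voter count only adds further voters (copies of Set 1 and Set 4 rows) and therefore leaves both $t$ and $k^\star$ untouched, so the rest of the original argument transfers verbatim. The reduction is polynomial in $n$ for any fixed $\epsilon > 0$, certifying NP-hardness of $k$-\UnaSup throughout the regime $t - k = \Omega(t^\epsilon)$.
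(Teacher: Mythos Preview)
Your approach shares the paper's core idea---exploit the freedom in $\ell$ to inflate $t$---but the decision to drop Set~4 creates a genuine gap. In $k$-\UnaSup the threshold is the \emph{prescribed} function $k(t)$, not your $k^\star$. Without Set~4, your reduced instance answers ``yes'' iff the graph has an independent set of size at least $\lceil (k(t) - 2\ell - 1)/2 \rceil$, so correctness demands that this value equal the IS target exactly, i.e., $t - k(t) \in \{2(n - k_{\mathrm{IS}}),\, 2(n - k_{\mathrm{IS}}) + 1\}$. You dismiss this as routine tuning, but for arbitrary $k(t)$ satisfying only $t - k(t) = \Omega(t^\epsilon)$ no such $t,n,\ell$ need exist. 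A stark case: if $k(t) = 0$ for all $t$ (which satisfies the hypothesis with $\epsilon = 1$), your instance without Set~4 admits a unanimously winning proposal iff \emph{some} independent set exists---trivially true, and independent of the IS target. More generally, $t - k(t)$ could take values only in a sparse set that never hits $2(n - k_{\mathrm{IS}})$. There is also an internal inconsistency: you invoke ``copies of Set~4 rows'' in the IWM-balancing post-processing after claiming to have dropped Set~4, but adding those rows reinstates the constraint $|S| \geq k_{\mathrm{IS}}$.

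The paper's proof sidesteps all of this by \emph{retaining} Set~4. Then any unanimously winning proposal automatically has $|p| \geq k^\star$, so the reduction is correct under the one-sided inequality $k(t) \leq k^\star$, equivalently $t - k(t) \geq 2(n - k_{\mathrm{IS}}) + 1$. Since the right side is fixed while $t - k(t) \geq Ct^\epsilon \to \infty$, this holds once $\ell$ is large enough, and the paper solves for the required $\ell$ explicitly. If you keep Set~4 and replace the ``match on the nose'' step by this inequality, your argument becomes correct---and coincides with the paper's.
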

\begin{proof}
In order to distinguish $k$, which is the size of the sought independent set, from $k$, which is the parametrization of \UnaSup for which we want to reduce to, we denote the latter with $f(t)$. In other words, we are reducing Independent Set to $f(t)$-\UnaSup.

Assume $0 < \epsilon \leq 1$ is such that $t - f(t)$ is $\Omega(t^\epsilon)$. In particular, assume $C > 0$ and $t_0 \geq 1$ are such that for all $t \geq t_0$ we have ${t - f(t) \geq Ct^\epsilon}$. The proof proceeds very similarly to that of Theorem \ref{th_una_half_is_np_hard}, except for choosing $\ell$. First, like before, we need to have $\ell \geq \lvert n - 2k \rvert$, in order to allow constraint \textbf{Set 4} to be written. Second, the (implicit) constraint that at least $f(t)$ variables have to be set to 1 is now written as follows, where recall that $t = 2n + 2\ell + 1$:

\begin{gather*}
    \sum_{i \in [n]}{(x_i + x_i')} + a_0 +  \sum_{i \in [\ell]}{(a_i + a_i')} \geq 2f(t) - (2n + 2\ell + 1) \\
    \iff \sum_{i \in [n]}{(x_i + x_i')} + a_0 \geq 2f(2n + 2\ell + 1) - (2n + 4\ell + 1)
\end{gather*}

As before, we want to have this constraint dominated by the one for the size of the independent set, so we require that $4k - 2n \geq 2f(2n + 2\ell + 1) - (2n + 4\ell + 1)$, which is equivalent to $4k \geq 2f(2n + 2\ell + 1) - (4\ell + 1).$ We now want to show that there is an $\ell \geq \lvert n - 2k \rvert$ satisfying this constraint of value bounded by a polynomial in $n$ (since otherwise our reduction would no longer produce a polynomially-sized matrix). To do so, we will show that such an $\ell$ exists satisfying the stronger condition $2f(2n + 2\ell + 1) \leq 4\ell + 1$, and, in fact $f(2n + 2\ell + 1) \leq 2\ell.$ Since for $2n + 2\ell + 1 \geq t_0$ we have that $f(2n + 2\ell + 1) \leq 2n + 2\ell + 1 - C(2n + 2\ell + 1)^\epsilon$, it suffices to find $\ell$ such that $2n + 2\ell + 1 - C(2n + 2\ell + 1)^\epsilon \leq 2\ell.$ The last condition is equivalent to $2n + 1 \leq C(2n + 2\ell + 1)^\epsilon,$ and in turn to
\[
\left(\frac{2n + 1}{C}\right)^{1/\epsilon} \leq 2n + 2\ell + 1 \iff \ell \geq \frac{1}{2}\left(\left(\frac{2n + 1}{C}\right)^{1/\epsilon} - 1\right) - n
\]
Denote the value of the latter right-hand side with $x$. Since $x$ is polynomial in $n$ for any fixed constants $C > 0$ and $0 < \epsilon \leq 1$, it follows that we can pick $\ell = \ceil*{\max\{t_0, \lvert n - 2k \rvert, x\}}$ to complete the proof.
\end{proof}

\begin{corollary}
$k$-\MajSup is NP-hard if $k \geq \floor*{t / 2} + 2$ and $t - k$ is $\Omega(t^\epsilon)$ for some constant $0 < \epsilon \leq 1$.
For instance, deciding whether a non-losing outcome agreeing in at least $99\%$ of all issues, or in at least $t - \sqrt{t}$ issues is NP-hard.
\end{corollary}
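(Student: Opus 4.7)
The plan is to derive this corollary by chaining Theorem~\ref{th_np_hard_stronger_form} with the Karp reduction from Theorem~\ref{th_maj_reduces_to_una}. The key observation is that the reduction constructing $\calP' = \calP \mid G_1 \mid \ldots \mid G_n$ from Theorem~\ref{th_maj_reduces_to_una} only appends copies of the gadget $G = \calP_t$ \emph{vertically}; so the number of issues $t$ remains unchanged, and the parameter $k$ passed to $k$-\UnaSup is the same as the parameter passed to $k$-\MajSup. In particular, the quantity $t - k$ is invariant under this reduction.

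Armed with this, I would first verify that the hypotheses of both theorems are jointly satisfiable. The assumption $k \geq \floor*{t/2} + 2$ is exactly the one required to invoke Theorem~\ref{th_maj_reduces_to_una}, while $t - k \in \Omega(t^\epsilon)$ is precisely the condition in Theorem~\ref{th_np_hard_stronger_form} giving NP-hardness of $k$-\UnaSup. Composing the two, any polynomial-time algorithm for $k$-\MajSup would, via the reduction, solve $k$-\UnaSup in polynomial time, contradicting Theorem~\ref{th_np_hard_stronger_form}.

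For the two concrete instantiations, I would check both conditions explicitly. If $k \geq 0.99\,t$, then $t - k \geq 0.01\,t$ is $\Omega(t^1)$, and for sufficiently large $t$ we have $0.99\,t \geq \floor*{t/2} + 2$; similarly, for $k \geq t - \sqrt{t}$ the gap $t - k \leq \sqrt{t}$ is $\Omega(t^{1/2})$ (taking the hardness instances that realize equality), and $t - \sqrt{t} \geq \floor*{t/2} + 2$ holds for all sufficiently large $t$. Both cases therefore fit into the framework above.

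I do not anticipate a genuine obstacle: the only subtlety is confirming that the reduction of Theorem~\ref{th_maj_reduces_to_una} leaves $t$ (and hence the $\Omega(t^\epsilon)$ growth condition) untouched, which is immediate from its construction. Everything else is a direct composition of already-established results.
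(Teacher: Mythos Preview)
Your approach is correct and is exactly what the paper intends: the corollary is stated without proof immediately after Theorem~\ref{th_np_hard_stronger_form}, and is meant to follow by composing that theorem with the reduction of Theorem~\ref{th_maj_reduces_to_una}, using precisely your observation that the reduction preserves $t$. One small slip in your write-up of the examples: from $k \geq 0.99\,t$ you get $t - k \leq 0.01\,t$, not $\geq$; what you want is to take $k = \lceil 0.99\,t \rceil$ (and $k = \lceil t - \sqrt{t}\,\rceil$) as the specific function, so that $t - k$ is $\Theta(t)$ (respectively $\Theta(\sqrt{t})$), which then fits the $\Omega(t^\epsilon)$ hypothesis.
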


We now turn our attention to parameterized complexity.
\begin{theorem} \MajSup is FPT with respect to any of the parameters $n$, $t$ and $h = \max_{i, j}{d_H(v_i, v_j)}.$ Moreover, \UnaSup is FPT with respect to $n$ and $t.$
\end{theorem}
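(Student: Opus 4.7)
The plan is to handle each parameter separately, assuming throughout that \IWM has been normalized to the all-ones vector. For parameter $t$, we brute-force enumerate all $2^t$ policies $p \in \B^t$ and check each in time $O(nt)$ for $|p| \geq k$ together with the electoral condition ($b_{p, \calP} \geq 0$ for \MajSup, $b_{p, \calP} = n$ for \UnaSup), giving a $2^t \cdot \mathrm{poly}(n, t)$-time algorithm for both problems.

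For parameter $n$, we bucket the $t$ issues by their column in $\calP$, obtaining at most $2^n$ column types; let $c_s$ denote the multiplicity of type $s \in \B^n$. A policy is determined (up to permutations within each bucket) by integer variables $y_s \in \{0, \ldots, c_s\}$ counting how many type-$s$ issues are set to $1$. Both the objective $|p| = \sum_s y_s$ and each voter $i$'s disagreement count $D_i = |v_i| + \sum_s (1 - 2s_i) y_s$ are linear in $y$. Thus \UnaSup directly becomes an integer linear program with $2^n$ variables, constraints $D_i < t/2$ for each voter $i$, and $\sum_s y_s \geq k$, solvable in FPT time by Lenstra's theorem on ILP feasibility. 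For \MajSup, we additionally introduce binary indicator variables $a_i, d_i$ tied to $D_i$ via big-$M$ linear constraints that enforce $a_i = 1 \Leftrightarrow$ voter $i$ approves and $d_i = 1 \Leftrightarrow$ voter $i$ disapproves, together with the aggregate constraint $\sum_i a_i \geq \sum_i d_i$; the total variable count remains $O(2^n + n)$, so Lenstra still applies. (An equivalent fallback is to enumerate all $3^n$ possible approval-pattern assignments of voters and solve a \UnaSup-style ILP for each compatible one, which also stays FPT in $n$.)

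For parameter $h$ (\MajSup only), we reduce to the $t$-parameterized case via the Laffond--Lainé cohesiveness bound recalled in our Related Work. If $h < (\sqrt{2} - 1) t$, Anscombe's paradox does not occur, so \IWM $= \mathbf{1}$ itself is non-losing and we immediately return YES with $p = \mathbf{1}$ (which satisfies $|p| = t \geq k$). Otherwise $t \leq h / (\sqrt{2} - 1) = O(h)$, so invoking the $t$-parameterized brute force gives total time $2^{O(h)} \cdot \mathrm{poly}(n, t)$. This reduction is unavailable for \UnaSup because Laffond--Lainé only guarantees majority (not unanimous) approval of \IWM, which is consistent with the theorem statement omitting $h$ there. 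The main subtlety across the proof is encoding \MajSup's disjunctive 'approvers $\geq$ disapprovers' requirement within Lenstra's framework; the indicator-variable ILP (or $3^n$-enumeration fallback) settles this, and everything else reduces to routine applications of Lenstra's theorem and the previously known cohesiveness bound.
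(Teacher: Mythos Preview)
Your proposal is correct and follows essentially the same approach as the paper: brute force over $2^t$ policies for parameter $t$, the Laffond--Lain\'e cohesiveness bound to reduce $h$ to the $t$ case, and column-type ILPs solved via Lenstra for parameter $n$ (with big-$M$ indicator variables added for \MajSup). The only cosmetic differences are that the paper uses three indicators $a_i, d_i, f_i$ with $a_i + d_i + f_i = 1$ rather than your two biconditional indicators, and does not mention your $3^n$-enumeration fallback.
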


\begin{proof} For parameter $t$, a straightforward enumeration of all proposals $p$ followed by counting the number of voters approving and disapproving $p$ already achieves complexity $O(nt2^t)$ for both problems, proving the claim. For parameter $h$, the situation is not very different for $k$-\MajSup:~in \cite{unanimity_and_anscombe} it is shown that, for $t$ above a certain threshold, if $h < (\sqrt{2} - 1)t,$ then issue-wise majority does not lose, fact which can be easily checked for before proceeding further. Otherwise, we know that $h \geq (\sqrt{2} - 1)t$, in which case $t \leq (1 + \sqrt{2})h.$ Hence, exhaustive search in this case runs in time $O(nt2^t) \subseteq O(nh2^{(1 + \sqrt{2})h}) \subseteq O(n5.34^h).$ Finally, the proofs of tractability with respect to $n$ are more involved, and are presented next.

We begin with $k$-\UnaSup. The proof proceeds by formulating the problem as an integer linear program, similarly to \cite{complexity_deliberative_coalition}. The main insight here is that there can be at most $2^n$ distinct columns in the preference matrix. For every possible column $c \in \B^n$, let $t_c$ denote the number of columns of type $c$; i.e.~columns identical to $c$; present in the matrix. Since identical columns are, in all practical regards, interchangeable, introduce variables $(x_c)_{c \in \B^n},$ where $x_c$ denotes for how many of the $t_c$ columns of type $c$ the sought proposal will have a one at the corresponding positions. Note that then $t_c - x_c$ columns will have a zero at the corresponding positions. With the following inequality, added for every $c \in \B^n$, we enforce consistency requirements:
\begin{equation} \label{cs_1}
    0 \leq x_c \leq t_c
\end{equation}
The following inequality is our ``optimization objective'':
\begin{equation} \label{cs_2}
\sum_{c \in \B^n}x_c \geq k
\end{equation}
The following inequality, added for every voter $i \in [n],$ asserts that $i$ does not disapprove of the proposal described by $x$:
\begin{equation*}
\sum_{\substack{c \in \B^n \\ c_i = 1}}x_c + \sum_{\substack{c \in \B^n \\ c_i = 0}}(t_c - x_c) > \frac{t}{2}
\end{equation*}
\noindent which, since the $t_c$ terms sum to $t - |v_i|$, is equivalent to
\begin{equation*}
\sum_{\substack{c \in \B^n \\ c_i = 1}}x_c - \sum_{\substack{c \in \B^n \\ c_i = 0}}x_c > |v_i| - \frac{t}{2}
\end{equation*}
\noindent which can be rewritten in a more standard form as:
\begin{equation} \label{cs_3}
\sum_{\substack{c \in \B^n \\ c_i = 1}}x_c - \sum_{\substack{c \in \B^n \\ c_i = 0}}x_c \geq |v_i| - \floor*{\frac{t - 1}{2}}
\end{equation}
Together, the constraint sets described by \eqref{cs_1}, \eqref{cs_2}, and \eqref{cs_3} make up our ILP. The number of variables is $2^n$ and the number of constraints is $2^{n + 1} + 1 + n,$ meaning that the associated system matrix is of shape $(2^{n + 1} + n + 1) \times (2^n + 1)$, the total number of cells hence being $O(4^n).$ The coefficients in the matrix are bounded in absolute value by $t$, meaning that the ILP instance can be stored in $O(4^n\log t)$ bits. Lenstra’s Algorithm (and subsequent improvements of it) solve an ILP in time exponential in the number of variables, but linear in the number of bits required to represent the instance matrix, so our result follows.

For $k$-\MajSup, we keep the same ILP-based proof idea, but the details need some refining. We keep the same variables and all inequalities arising from \eqref{cs_1} and \eqref{cs_2}, but \eqref{cs_3} needs to be replaced by inequalities signifying that the proposal described by $x$ does not lose. To model this, introduce $3n$ additional binary variables $(a_i)_{i \in [n]}, (d_i)_{i \in [n]}, (f_i)_{i \in [n]}$, with the meaning that $a_i$ is 1 if and only if voter $i$ approves of the proposal, $d_i$ is 1 if and only if voter $i$ disapproves of the proposal, and $f_i$ is 1 if and only if voter $i$ is indifferent for the proposal. The following inequalities, added for every voter $i \in [n],$ enforce consistency requirements:
\begin{gather}
    a_i, d_i, f_i \geq 0 \label{cs_4} \\
    a_i + d_i + f_i = 1 \label{cs_5}
\end{gather}
To enforce that every voter $i \in [n]$ with $a_i = 1$ actually approves of the proposal, we use the following inequality:
%give the we first formulate our set of desiderata. If voter $i$ approves the proposal; i.e.~$a_i = 1, d_i = 0$; then we would like to enforce the following:
%
\begin{gather}
\sum_{\substack{c \in \B^n \\ c_i = 1}}x_c - \sum_{\substack{c \in \B^n \\ c_i = 0}}x_c \geq |v_i| - \floor*{\frac{t - 1}{2}} - \beta(1 - a_i) \label{cs_6}
\end{gather}
where $\beta = 10t + 10$ was chosen as a large-enough constant linear in $t$. Note that for $a_i = 1$ the inequality above becomes \eqref{cs_3}, while for $a_i = 0$ it is always satisfied because the left-hand side is always between $-t$ and $t.$ Similarly, to enforce that every voter $i \in [n]$ with $d_i = 1$ actually disapproves of the proposal, we use the following similar inequality:
\begin{gather}
\sum_{\substack{c \in \B^n \\ c_i = 1}}x_c - \sum_{\substack{c \in \B^n \\ c_i = 0}}x_c \leq |v_i| - \floor*{\frac{t - 1}{2}} - 2 + \beta(1 - d_i) \label{cs_7}
\end{gather}
Finally, to enforce that every voter $i \in [n]$ with $f_i = 1$ is actually indifferent for the proposal, we use the following inequalities:
\begin{gather}
\sum_{\substack{c \in \B^n \\ c_i = 1}}x_c - \sum_{\substack{c \in \B^n \\ c_i = 0}}x_c \leq |v_i| - \floor*{\frac{t - 1}{2}} - 1 + \beta(1 - f_i) \label{cs_8}\\
\sum_{\substack{c \in \B^n \\ c_i = 1}}x_c - \sum_{\substack{c \in \B^n \\ c_i = 0}}x_c \geq |v_i| - \floor*{\frac{t - 1}{2}} - 1 - \beta(1 - f_i) \label{cs_9}
\end{gather}
Together, the constraint sets described by \eqref{cs_1}, \eqref{cs_2} and \eqref{cs_4}--\eqref{cs_9} make up our ILP. The associated system matrix is of shape $(2^{n + 1} + 9n + 1) \times (2^n + 3n + 1).$ The number of cells is $O(4^n),$ each taking $O(\log t)$ bits to store, so the ILP instance can be stored in $O(4^n\log t)$ bits. Using Lenstra's Algorithm, the conclusion again follows.
\end{proof}

\section[The Case k = t / 2 + 1]{The Case $k = \floor*{t / 2} + 1$}

We have seen that $(\floor*{t / 2} + 2)$-\MajSup is NP-hard.
Fritsch and Wattenhofer \cite{robin_price_of_majority} gave a nonconstructive proof that if $t$ is odd then there exists a non-losing proposal $p$ with $|p| \geq \floor*{t / 2} + 1.$ Their proof hinges on an algebraic combinatorial identity, which the authors found difficult to interpret intuitively. Moreover, they left finding such a proposal in polynomial time open. The purpose of this section is threefold:~first, by uncovering voter-wise structure preserving symmetries over the space of proposals, we give a new probabilistic argument for their result, extending also to the even $t$ case; then, we show how this leads to a simple randomized algorithm running in expected polynomial time assuming odd $t$ and $\Delta > 0$ (see notation below); finally, using derandomization, we give a deterministic polynomial-time algorithm for finding a non-losing proposal $p$ with $|p| \geq \floor*{t / 2} + 1,$ this time for general $t$ and $\Delta.$ We note that, while the deterministic algorithm can handle general instances, the randomized algorithm is more efficient and easier to describe and implement.

Throughout the section, we work with a fixed $n \times t$ preference profile $\calP.$ We write $\Delta$ for the total number of ones minus zeros in matrix $\calP.$ Since we assumed that issue-wise majority is the all-ones vector, it follows that $\Delta \geq 0$ is guaranteed.

\subsection{Structure-Preserving Symmetries and a Probabilistic Proof}
In this section for each voter we construct two structure-preserving bijective correspondences between proposals. We use these to derive a third correspondence with the property that the signs of $b_{p}$ and $b_{v, p}$ are preserved as long as they are non-zero, fact which will be instrumental in our proof. We believe these observed symmetries to be of independent interest. Afterwards, we define two probabilistic thought experiments:~one whose expectation is easy to compute, and can be seen to be non-negative, but alone does not mean much, and one whose expectation might appear difficult to compute, but a non-negative value would easily imply our conclusion. Using the third correspondence above we then deduce that the two expectations are equal, implying the result of \cite{robin_price_of_majority} for arbitrary parity of $t.$

To begin, consider some voter whose vote is $v$ with $|v| = \ell.$ Let $B^* \subseteq \B^t$ be the set of proposals $p$ such that $b_p \neq 0$ and $b_{v, p} \neq 0.$
We classify proposals in $B^*$ into four distinct categories, called $T_{00}, T_{01}, T_{10}, T_{11}$. In particular, the first bit of the subscript is 0/1 depending on whether $b_p$ is negative/positive, while the second bit is 0/1 depending on whether $b_{v, p}$ is negative/positive. We now define two bijective maps $f_{v, 1}, f_{v, 0} : B^* \to B^*.$ We abbreviate $f_{v, 1}$ and $f_{v, 0}$ to simply $f_1$ and $f_0$ when $v$ is clear from context. Without loss of generality, assume $v$ has ones in its first $\ell$ issues and zeros in the other $t - \ell$ issues. Consequently, any proposal $p \in B^*$ can be written as $p = p_1p_0$, where $p_1 \in \B^{\ell}$ and $p_0 \in \B^{t - \ell}$. For convenience, we also write $v = v_1v_0,$ where $v_1 = 1^\ell$ and $v_0 = 0^{t - \ell}.$ The two bijective maps are then given by $f_1(p_1p_0) = \overline{p_1}p_0$ and $f_0(p_1p_0) = p_1\overline{p_0}$.

\begin{lemma} \label{lemma_f_0} $f_0$ maps proposals of type $T_{ij}$ to proposals of type $T_{ji}$, for $i, j \in \{0, 1\}$. Moreover, for any $p \in B^*$ we have $b_{v, f_0(p)} = b_p.$
\end{lemma}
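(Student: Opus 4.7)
The plan is to parameterize proposals $p \in B^\ast$ by the counts of ones in the two halves determined by $v$, and then verify the claim by direct substitution. Without loss of generality we have $v = 1^\ell 0^{t-\ell}$, so write $p = p_1 p_0$ and set $a_1 = |p_1|$, $a_0 = |p_0|$. Then $|p| = a_1 + a_0$ and the number of coordinates where $v$ and $p$ agree is $a_1 + (t-\ell) - a_0$, which lets me express the two balances as linear functions of $(a_1, a_0)$:
\begin{equation*}
b_p = 2(a_1 + a_0) - t, \qquad b_{v,p} = t - 2\ell + 2a_1 - 2a_0.
\end{equation*}

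Next I apply $f_0$, which fixes $p_1$ and complements $p_0$. This sends $(a_1, a_0) \mapsto (a_1, (t-\ell) - a_0)$. Substituting into the formulas above, I would check that
\begin{equation*}
b_{f_0(p)} = 2a_1 + 2(t-\ell) - 2a_0 - t = t - 2\ell + 2a_1 - 2a_0 = b_{v,p},
\end{equation*}
and analogously $b_{v, f_0(p)} = b_p$. This immediately yields the second assertion, and also shows that the two balances are swapped by $f_0$, so a proposal of type $T_{ij}$ is mapped to one of type $T_{ji}$; in particular $f_0(p) \in B^\ast$ whenever $p \in B^\ast$, since neither balance becomes zero under the swap.

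Bijectivity comes for free: $f_0$ is its own inverse, as complementing $p_0$ twice is the identity. The only real bookkeeping step is the arithmetic above, and I expect no genuine obstacle — the main thing to be careful about is keeping straight which half of the vector is flipped and correctly counting agreements versus disagreements with $v$ in each half.
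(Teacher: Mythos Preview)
Your proof is correct and follows essentially the same approach as the paper: both arguments show that $f_0$ swaps the two balances $b_p$ and $b_{v,p}$, from which the type swap $T_{ij}\to T_{ji}$ and the identity $b_{v,f_0(p)}=b_p$ follow immediately. The only cosmetic difference is that the paper works with the decomposition $b_{p_1p_0}=b_{p_1}+b_{p_0}$ and the identities $b_{p_1}=b_{v_1,p_1}$, $b_{\overline{p_0}}=b_{v_0,p_0}$, whereas you unfold these into explicit arithmetic with the counts $a_1,a_0$.
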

\begin{proof} Consider a proposal $p = p_1p_0 \in B^*.$ Then, $b_{f_0(p)} = b_{p_1\overline{p_0}} = b_{p_1} + b_{\overline{p_0}} = b_{v_1, p_1} + b_{v_0, p_0} = b_{v, p}.$ Similarly, $b_{v, f_0(p)} = b_{v_1v_0, p_1\overline{p_0}} = b_{v_1, p_1} + b_{v_0, \overline{p_0}} = b_{p_1} + b_{p_0} = b_{p_1p_0} = b_p.$ Therefore, $f_0$ swapped the two quantities $b_p$ and $b_{v, p}$, meaning that if $p$ is of type $T_{ij}$, then $f_0(p)$ is of type $T_{ji}.$
\end{proof}

\begin{lemma} \label{lemma_f_1} $f_1$ maps proposals of type $T_{ij}$ to type $T_{(1 - j)(1 - i)}$ proposals, for $i, j \in \{0, 1\}$. Moreover, for any $p$ we have $b_{v, f_1(p)} = -b_p.$
\end{lemma}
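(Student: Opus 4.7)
The plan is to imitate the proof of Lemma \ref{lemma_f_0}, once again decomposing $p = p_1p_0$ according to the split $v = v_1v_0 = 1^\ell 0^{t-\ell}$. The key observation to record up front is how $b_{v_i, \cdot}$ acts on a block: since $v_1 = 1^\ell$ counts ones as matches, we have $b_{v_1, q} = b_q$ for any $q \in \B^\ell$; dually, since $v_0 = 0^{t-\ell}$ counts zeros as matches, $b_{v_0, q} = -b_q$ for any $q \in \B^{t-\ell}$. Complementing a block negates this count, i.e.\ $b_{\overline{q}} = -b_q$.

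Using these facts I would carry out two short computations for $f_1(p) = \overline{p_1}p_0$. First,
\[
b_{f_1(p)} = b_{\overline{p_1}} + b_{p_0} = -b_{p_1} + b_{p_0},
\]
while on the other hand $b_{v,p} = b_{v_1,p_1} + b_{v_0,p_0} = b_{p_1} - b_{p_0}$, so $b_{f_1(p)} = -b_{v,p}$. Second,
\[
b_{v, f_1(p)} = b_{v_1, \overline{p_1}} + b_{v_0, p_0} = -b_{p_1} - b_{p_0} = -(b_{p_1} + b_{p_0}) = -b_p,
\]
which is exactly the second claim of the lemma.

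From these two identities the typing assertion follows by reading off signs: if $p \in T_{ij}$, meaning $\operatorname{sign}(b_p)$ encodes $i$ and $\operatorname{sign}(b_{v,p})$ encodes $j$ under the convention $1 \leftrightarrow \text{positive}$, $0 \leftrightarrow \text{negative}$, then $b_{f_1(p)} = -b_{v,p}$ has the flipped sign $1-j$, and $b_{v,f_1(p)} = -b_p$ has the flipped sign $1-i$, so $f_1(p) \in T_{(1-j)(1-i)}$. In particular $b_{f_1(p)} \neq 0$ and $b_{v,f_1(p)} \neq 0$, confirming that $f_1$ indeed maps $B^*$ into $B^*$; bijectivity is immediate from $f_1$ being an involution (flipping $p_1$ twice gives back $p_1$).

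There is no real obstacle here: the whole argument is sign bookkeeping, parallel to Lemma \ref{lemma_f_0}. The only thing to be careful about is the direction of the typing map — $f_1$ does not simply swap coordinates the way $f_0$ does; it sends $T_{ij}$ to $T_{(1-j)(1-i)}$ rather than $T_{(1-i)(1-j)}$ — so I would double-check which identity ($b_{f_1(p)} = -b_{v,p}$ versus $b_{v,f_1(p)} = -b_p$) governs which coordinate of the subscript before concluding.
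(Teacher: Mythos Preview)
Your proof is correct and takes essentially the same approach as the paper: decompose $p = p_1p_0$, compute $b_{f_1(p)}$ and $b_{v,f_1(p)}$ blockwise using the identities $b_{v_1,q} = b_q$, $b_{v_0,q} = -b_q$, and $b_{\overline{q}} = -b_q$, conclude that $f_1$ swaps and negates the pair $(b_p, b_{v,p})$, and read off the type. The only cosmetic difference is that you state the block identities explicitly up front and add the remark about $f_1$ being an involution, which the paper leaves implicit.
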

\begin{proof} 
Consider a proposal $p = p_1p_0 \in B^*.$ Then, $b_{f_1(p)} = b_{\overline{p_1}p_0} = b_{\overline{p_1}} + b_{p_0} = -b_{v_1, p_1} - b_{v_0, p_0} = -b_{v, p}.$ Similarly, $b_{v, f_1(p)} = b_{v_1v_0, \overline{p_1}p_0} = b_{v_1, \overline{p_1}} + b_{v_0, p_0} = -b_{p_1} - b_{p_0} = -b_{p_1p_0} = -b_p.$ Therefore, $f_1$ swapped and negated the two quantities $b_p$ and $b_{v, p},$ meaning that if $p$ is of type $T_{ij}$, then $f_1(p)$ is of type $T_{(1 - j)(1 - i)}.$
\end{proof}

\begin{figure}[t]
    \centering
    \begin{subfigure}{0.43\linewidth}
    \centering
    \includegraphics[width=0.95\linewidth]{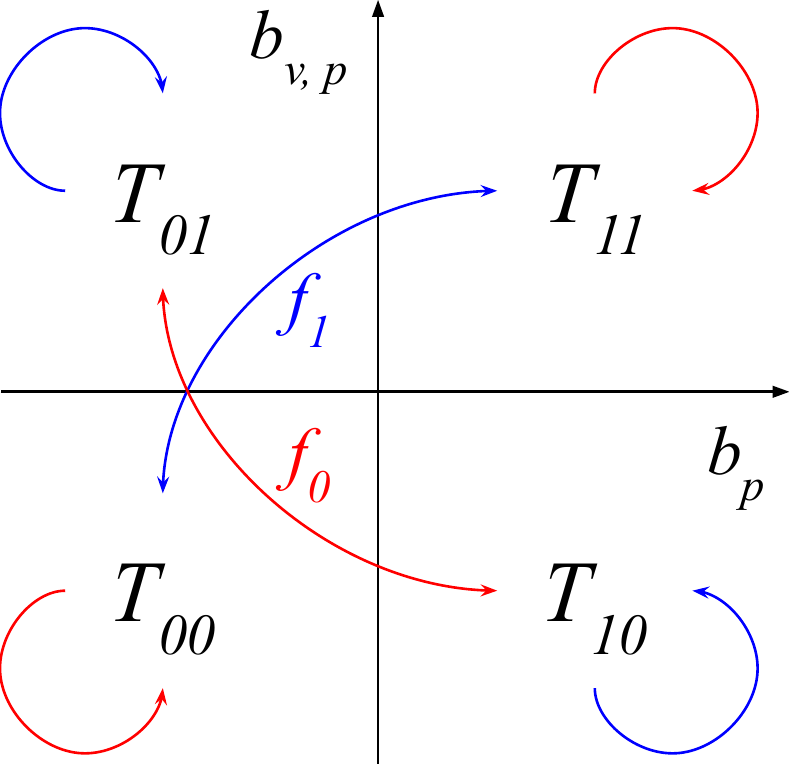}
    \caption{Bijections $f_{v, 0}$ and $f_{v, 1}.$}
    \label{fig:bijection_f_0_1}
    \end{subfigure}
    \begin{subfigure}{0.43\linewidth}
    \centering
    \includegraphics[width=0.95\linewidth]{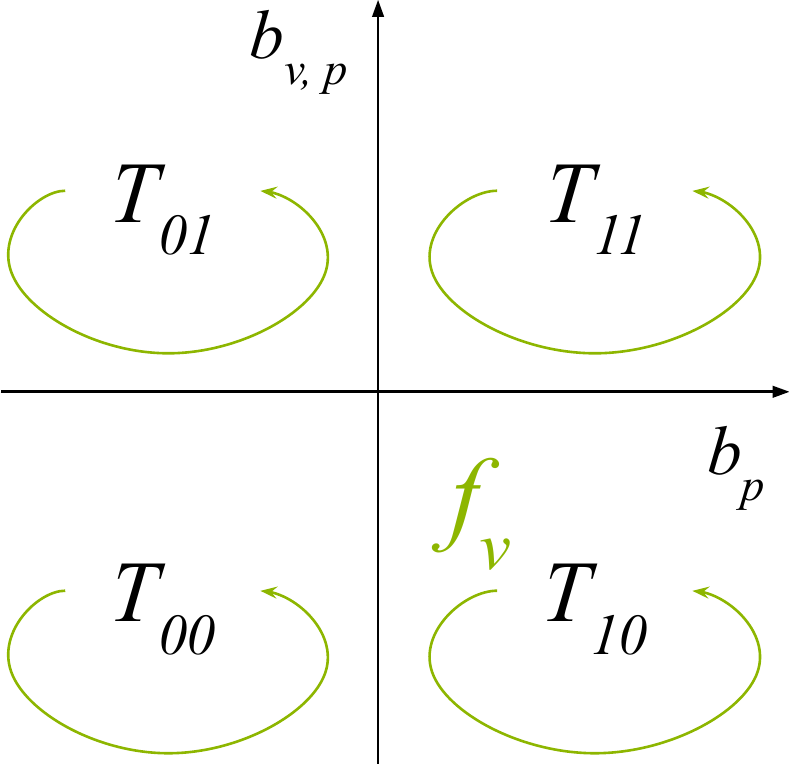}
    \caption{Bijection $f_v.$}
    \label{fig:bijection_f_v}
    \end{subfigure}
    \caption{Bijections $f_{v, 0}, f_{v, 1}$ and $f_v$ corresponding to vote $v$.}
    \label{fig:bijection}
    \Description{The bijective maps used in our proof.}
    \label{fig:bijections}
\end{figure}

Lemma \ref{lemma_f_0} and Lemma \ref{lemma_f_1} are illustrated in Figure \ref{fig:bijection_f_0_1}.

Using $f_{v, 0}$ and $f_{v, 1}$ we can now define a ``mixed'' bijective map $f_v : B^* \to B^*$, as follows:~let $p \in B^*$ be arbitrary, if $p$ is of types $T_{00}$ or $T_{11}$, then $f_v$ maps $p \mapsto f_{v, 0}(p)$, otherwise $f_v$ maps $p \mapsto f_{v, 1}(p)$. Note that the three bijective maps are self-inverse. The new map $f_v$ inherits the properties of the other maps from Lemmas \ref{lemma_f_0} and \ref{lemma_f_1}:

\begin{corollary} \label{coro_f} $f_v$ maps proposals of type $T_{ij}$ to proposals of type $T_{ij}$. Moreover, for any proposal $p$ of type $T_{00}$ or $T_{11}$ we have $b_{v, f_v(p)} = b_p$, and for any proposal $p$ of type $T_{01}$ or $T_{10}$ we have $b_{v, f_v(p)} = -b_p$.
\end{corollary}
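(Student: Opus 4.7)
The plan is to verify both claims by a straightforward case analysis on the type $T_{ij}$ of $p$, invoking Lemmas \ref{lemma_f_0} and \ref{lemma_f_1} directly. The definition of $f_v$ splits the cases into two groups: the ``diagonal'' types $T_{00}, T_{11}$ on which $f_v$ acts as $f_{v, 0}$, and the ``off-diagonal'' types $T_{01}, T_{10}$ on which $f_v$ acts as $f_{v, 1}$. In each group we need to check two things: that the type is preserved, and that the stated equation relating $b_{v, f_v(p)}$ to $b_p$ holds.

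First I would handle the diagonal case. For $p \in T_{00} \cup T_{11}$, we have $f_v(p) = f_{v, 0}(p)$, and Lemma \ref{lemma_f_0} tells us that $f_{v, 0}$ sends $T_{ij}$ to $T_{ji}$. Since $T_{00}$ and $T_{11}$ are fixed by the swap $(i,j) \mapsto (j, i)$, the type is preserved. The same lemma gives $b_{v, f_{v, 0}(p)} = b_p$, which is exactly the identity claimed in this case. Next I would do the off-diagonal case: for $p \in T_{01} \cup T_{10}$, $f_v(p) = f_{v, 1}(p)$, and Lemma \ref{lemma_f_1} sends $T_{ij}$ to $T_{(1 - j)(1 - i)}$. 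Plugging in $(i, j) = (0, 1)$ yields $T_{(1 - 1)(1 - 0)} = T_{01}$, and $(i, j) = (1, 0)$ yields $T_{(1 - 0)(1 - 1)} = T_{10}$, so these two types are likewise fixed. Lemma \ref{lemma_f_1} then also gives $b_{v, f_{v, 1}(p)} = -b_p$, matching the corollary.

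Combining the two cases establishes both statements: $f_v$ maps each $T_{ij}$ into itself, and the $b_{v, f_v(p)}$ identity has the sign prescribed by the type. There is essentially no obstacle here beyond careful bookkeeping; the only point where one must be slightly attentive is reading off the index permutation in Lemma \ref{lemma_f_1} correctly, since $(1-j)(1-i)$ inverts and swaps the bits, and it is easy to conflate this with the pure swap from Lemma \ref{lemma_f_0}. Once one verifies that $(1-j, 1-i)$ happens to fix $(0, 1)$ and $(1, 0)$, the rest is immediate from the two lemmas.
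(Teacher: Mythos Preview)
Your proposal is correct and follows essentially the same approach as the paper's own proof: a two-case split on whether $p$ lies in $T_{00} \cup T_{11}$ (where $f_v = f_{v,0}$ and Lemma~\ref{lemma_f_0} applies) or in $T_{01} \cup T_{10}$ (where $f_v = f_{v,1}$ and Lemma~\ref{lemma_f_1} applies), observing in each case that the relevant index map fixes the type and reading off the $b_{v,f_v(p)}$ identity directly from the lemma.
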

\begin{proof}
First, consider a proposal $p$ of type $T_{ij}$ for $ij \in \{00, 11\}.$ By definition, $f_v$ maps $p \mapsto f_{v, 0}(p)$. By Lemma \ref{lemma_f_0} we have that $f_{v, 0}(p)$ is of type $T_{ji}$, which is the same as $T_{ij}$ for $ij \in \{00, 11\}$, and that $b_{v, f_0(p)} = b_p.$ Second, consider a proposal $p$ of type $T_{ij}$ for $ij \in \{01, 10\}.$ By definition, $f_v$ maps $p \mapsto f_{v, 1}(p)$. By Lemma \ref{lemma_f_1} we have that $f_{v, 1}(p)$ is of type $T_{(1-j)(1-i)}$, which is the same as $T_{ij}$ for $ij \in \{01, 10\}$, and that $ b_{v, f_1(p)} = -b_p,$ finishing the proof. 
\end{proof}

Corollary \ref{coro_f} is illustrated in Figure \ref{fig:bijection_f_v}. Next, we introduce the first thought experiment, which alone would not be very helpful, but its expectation is relatively straightforward to compute. For brevity, introduce the notation $B_m = \{p \in \B^t : |p| > t / 2\}$, whose size is $|B_m| = 2^{t - 1}$ for odd $t$ and $|B_m| = 2^{t - 1} - \frac{1}{2}\binom{t}{t / 2}$ for even $t$.

\paragraph{\textbf{Thought Experiment TE1}} We sample a proposal $p \in B_m$ uniformly at random and start with a global counter $X = 0$; each voter $i$ then looks at their own vote $v_i$ and adds $1$ to $X$ for each position in which $v_i$ and $p$ agree and subtracts $1$ from $X$ for each position where they disagree. Overall, voter $i$ adds $b_{v_i, p}$ to $X$. We are interested in the expected value $\E[X]$.

\begin{theorem} $\E[X] = \Delta |B_m|^{-1}\binom{t - 1}{t / 2}$, where $\Delta$ denotes the number of ones minus the number of zeros in the preference profile $\calP$.
\end{theorem}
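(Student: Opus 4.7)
The plan is to exploit linearity of expectation together with the symmetry of $B_m$ under permutations of the $t$ coordinates. Writing $X = \sum_{i=1}^n b_{v_i, p}$ and expanding $b_{v_i, p} = \sum_{j=1}^t s_{i,j}$ where $s_{i,j} = +1$ if $v_{i,j} = p_j$ and $-1$ otherwise, I would swap the order of summation to obtain
\begin{equation*}
\E[X] = \sum_{j=1}^{t} \E\!\left[\sum_{i=1}^{n} s_{i,j}\right].
\end{equation*}
Letting $\delta_j$ denote the number of ones minus the number of zeros in column $j$ of $\calP$, the inner sum equals $+\delta_j$ when $p_j = 1$ and $-\delta_j$ when $p_j = 0$, so it evaluates to $\delta_j(2\P[p_j = 1] - 1)$. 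Since $\sum_j \delta_j = \Delta$, the task reduces to evaluating $\P[p_j = 1]$ for a fixed~$j$.

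Next I would invoke the coordinate-permutation symmetry of $B_m$: the set $B_m = \{p : |p| > t/2\}$ is invariant under any permutation of the $t$ issues, so $\P[p_j = 1]$ does not depend on $j$. Hence it suffices to count, say for $j = 1$, how many $p \in B_m$ have $p_1 = 1$ versus $p_1 = 0$. For $p_1 = 1$ the remaining $t-1$ bits must satisfy $|p_{2:t}| > t/2 - 1$, and for $p_1 = 0$ they must satisfy $|p_{2:t}| > t/2$. Subtracting the two counts telescopes, with most binomial terms cancelling and leaving a single term $\binom{t-1}{\lfloor t/2 \rfloor}$; i.e., the number of $(t-1)$-bit vectors of weight exactly $\lfloor t/2 \rfloor$. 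This gives
\begin{equation*}
2\P[p_j = 1] - 1 \;=\; \frac{\binom{t-1}{t/2}}{|B_m|},
\end{equation*}
using the convention that $\binom{t-1}{t/2}$ stands for $\binom{t-1}{\lfloor t/2 \rfloor}$ when $t$ is odd. Combining with the previous step yields the desired formula.

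The only genuine step requiring care is the parity split in the combinatorial count of $B_m$ and of $\{p \in B_m : p_1 = 1\}$: for odd $t$ one has $|B_m| = 2^{t-1}$ and the telescoping is clean, whereas for even $t$ one must use $|B_m| = 2^{t-1} - \tfrac{1}{2}\binom{t}{t/2}$ and keep track of the boundary weight $t/2$ (which is excluded from $B_m$). In both cases the telescoping difference ends up being $\binom{t-1}{\lfloor t/2 \rfloor}$, so the formula in the statement holds uniformly, but the two cases should be written out separately to make the cancellation transparent. Apart from this book-keeping the argument is purely linearity of expectation plus symmetry, and there is no probabilistic subtlety.
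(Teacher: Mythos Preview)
Your proposal is correct and follows essentially the same approach as the paper: linearity of expectation reduces the computation to $\Delta \cdot (\P[p_j=1]-\P[p_j=0])$, and then a symmetry-plus-cancellation argument evaluates this difference as $|B_m|^{-1}\binom{t-1}{\lfloor t/2\rfloor}$. The only cosmetic differences are that you group terms by column (via $\delta_j$) rather than by individual cell, and you phrase the count as a telescoping sum of binomials rather than as a bijection flipping the first coordinate; both arrive at the same single surviving term $\binom{t-1}{\lfloor t/2\rfloor}$.
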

\begin{proof} Write $X = \sum_{i = 1}^n\sum_{j = 1}^{t}X_{i,j}$, where $X_{i, j} = 1$ if $v_{i, j} = p_j$, and $-1$ otherwise. By linearity of expectation 
\[
\E[X] = \E\left[\sum_{i = 1}^n\sum_{j = 1}^{t}X_{i,j}\right] = \sum_{i = 1}^n\sum_{j = 1}^{t}\E\left[X_{i,j}\right]
\]

\noindent so we only need to compute $\E\left[X_{i,j}\right]$. For the case $v_{i, j} = 1$, we have that $\E\left[X_{i,j}\right] = \P(p_j = 1) - \P(p_j = 0)$, while for the case $v_{i, j} = 0$, we have that $\E\left[X_{i,j}\right] = \P(p_j = 0) - \P(p_j = 1)$. Therefore, we get that $\E[X] = \Delta(\P(p_j = 1) - \P\left(p_j = 0)\right)$. Because $p \in B_m$ is sampled uniformly, it follows that $\P(p_j = 1) - \P(p_j = 0) = |B_m|^{-1}(N_1 - N_0),$ where $(N_k)_{k \in \{0, 1\}}$ is the total number of proposals $p \in B_m$ such that $p_j = k.$ Hence, it remains to compute $N_1 - N_0$.

By symmetry, $(N_k)_{k \in \{0, 1\}}$ do not depend on $j$, so assume without loss of generality that $j = 1$. Consider the bijective map $q : \B^t \to \B^t$ flipping the first entry in the proposal. Consider a proposal $p \in B_m$ such that $p_1 = 1$. If $|p| > \floor*{t / 2} + 1$, then $|q(p)| \geq \floor*{t / 2} + 1$ and $q(p)_1 = 0$ hold, so $p$ and $q(p)$ cancel each other out in $N_1 - N_0.$ This leaves us with counting in $N_1$ those proposals $p \in \B^t$ with $|p| = \floor*{t / 2} + 1$ and $p_1 = 1$, which are the only ones not accounted for. This is the number of ways to choose $\floor*{t / 2}$ entries from $t - 1$ available slots, as $p_1$ is set to $1$, equalling $\binom{t - 1}{\floor*{t / 2}}.$
\end{proof}

Since $\Delta \geq 0$, it follows that $\E[X] \geq 0$, but this is not very useful on its own, as it only implies that there is a proposal $p \in B_m$ with $\sum_{i = 1}^n b_{v_i, p} \geq 0,$ which is already true for $p$ being the all-ones vector, which might lose against all-zeros; i.e., Anscombe's paradox. We now define a slightly less natural thought experiment, computing a number $Y$. Afterwards, we will use our voter-wise bijective maps $f_{v_i}$ to conclude that $\E[Y] = \E[X]$, from which our main result will follow.

\paragraph{\textbf{Thought Experiment TE2}} We sample a proposal $p \in B_m$ uniformly at random and start with a global counter $Y = 0$. Each voter $i$ then compares $p$ with their own vote $v_i.$ If $i$ approves of $p$, they add $b_p$ to $Y$; if $i$ disapproves of $p$, they subtract $b_p$ from $Y$; and if $i$ is indifferent for $p$, then they leave $Y$ unchanged. We are interested in the expected value $\E[Y]$. \\

For a proposal $p \in \B^t$, recall that $b_{p, \calP} = a_{p, \calP} - d_{p, \calP}$ is the number of voters approving $p$ minus the number of voters disapproving $p$ in preference profile $\calP$. Knowing this, it is useful to note that $Y = b_p \cdot b_{p, \calP} = (2|p| - t) \cdot b_{p, \calP}$. The following surprising connection constitutes the key insight in our argument.

\begin{theorem} \label{th_exp_y} $\E[X] = \E[Y] = \Delta |B_m|^{-1}\binom{t - 1}{t / 2}$
\end{theorem}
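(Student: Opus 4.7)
The plan is to show $\E[Y] = \E[X]$, the value of $\E[X]$ having already been computed in the previous theorem. Write $X = \sum_{i=1}^n X_i$ and $Y = \sum_{i=1}^n Y_i$, where $X_i = b_{v_i, p}$ and $Y_i$ equals $b_p$, $-b_p$, or $0$ according as voter $i$ approves, disapproves, or is indifferent to $p$. By linearity of expectation it suffices to establish $\E[X_i] = \E[Y_i]$ for every fixed voter $i$, which, since $p$ is uniform on $B_m$, is equivalent to $\sum_{p \in B_m} X_i(p) = \sum_{p \in B_m} Y_i(p)$.

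Fix a voter $i$ and let $B^*$, $T_{00}, T_{01}, T_{10}, T_{11}$ and $f_{v_i}$ refer to the objects defined earlier in the section with respect to $v = v_i$. Proposals $p \in B_m$ with $b_{v_i, p} = 0$ contribute $0$ to both sums, so we may restrict attention to $B_m \cap B^*$. Since every $p \in B_m$ has $b_p > 0$, this intersection is exactly $T_{11} \cup T_{10}$. On $T_{11}$ voter $i$ approves, so $Y_i(p) = b_p$; on $T_{10}$ voter $i$ disapproves, so $Y_i(p) = -b_p$; in both cases $X_i(p) = b_{v_i, p}$.

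The main step is to apply Corollary \ref{coro_f}: $f_{v_i}$ is type-preserving, so it restricts to a bijection on each of $T_{11}$ and $T_{10}$, with $b_{v_i, f_{v_i}(p)} = b_p$ on $T_{11}$ and $b_{v_i, f_{v_i}(p)} = -b_p$ on $T_{10}$. Changing variables along $f_{v_i}$ in each sum yields
\[
\sum_{p \in T_{11}} b_{v_i, p} \;=\; \sum_{p \in T_{11}} b_{v_i, f_{v_i}(p)} \;=\; \sum_{p \in T_{11}} b_p, \qquad \sum_{p \in T_{10}} b_{v_i, p} \;=\; -\sum_{p \in T_{10}} b_p,
\]
which are precisely $\sum_{T_{11}} X_i = \sum_{T_{11}} Y_i$ and $\sum_{T_{10}} X_i = \sum_{T_{10}} Y_i$. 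Adding these two equalities and then summing over $i \in [n]$ gives $\E[X] = \E[Y]$, completing the proof.

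The one subtlety worth checking carefully is that $f_{v_i}$ respects membership in $B_m$, i.e.\ that $T_{11}$ and $T_{10}$ stay inside $B_m$ under the change of variables; this is exactly the type-preservation clause of Corollary \ref{coro_f}, since the first subscript of a type records the sign of $b_p$. Everything else is just tracking which of the four cases of the bijection applies.
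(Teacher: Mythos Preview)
Your proof is correct and follows essentially the same approach as the paper: decompose $X$ and $Y$ voter-wise, identify $B_m \cap B^*$ with $T_{11} \cup T_{10}$, and use the type-preserving bijection $f_{v_i}$ together with Corollary~\ref{coro_f} to convert $b_{v_i,p}$ into $\pm b_p$ on each piece. The paper's version is organizationally identical, merely writing $B_+$ and $B_-$ for what you call $T_{11}$ and $T_{10}$.
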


To prove this, write $Y = \sum_{i \in [n]}Y_i$, where $Y_i$ is $b_p$ if $i$ approves of $p$, is $-b_p$ if $i$ disapproves of $p$, and is $0$ otherwise. Recall that $X = \sum_{i \in [n]}{b_{v_i, p}}$. By linearity of expectation, it suffices to show that $\E[b_{v_i, p}] = \E[Y_i],$ which we do in the following.

\begin{lemma} For any voter $i \in [n]$, we have $\E[b_{v_i, p}] = \E[Y_i].$
\end{lemma}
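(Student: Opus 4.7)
The plan is to expand both expectations as sums over $B_m$ and use the self-bijection $f_{v_i}$ from Corollary~\ref{coro_f} as a change of variable to match terms. First observe that $B_m = \{p \in \B^t : b_p > 0\}$, since $|p| > t/2$ is equivalent to $b_p = 2|p| - t > 0$. For each $p \in B_m$ there are two cases with respect to voter $i$: either (a) $b_{v_i, p} = 0$, in which case both the contribution $b_{v_i, p}$ to $X$ and the contribution $Y_i$ to $Y$ are zero, so $p$ contributes nothing to either expectation; or (b) $b_{v_i, p} \neq 0$, in which case $p \in B^*$, and because $b_p > 0$ the proposal $p$ is of type $T_{11}$ (voter $i$ approves) or of type $T_{10}$ (voter $i$ disapproves).

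Partitioning the nontrivial contributions by type, and using that $Y_i = b_p$ on $T_{11}$ and $Y_i = -b_p$ on $T_{10}$, I would write
\begin{align*}
|B_m| \cdot \E[b_{v_i, p}] &= \sum_{p \in T_{11}} b_{v_i, p} + \sum_{p \in T_{10}} b_{v_i, p},\\
|B_m| \cdot \E[Y_i] &= \sum_{p \in T_{11}} b_p - \sum_{p \in T_{10}} b_p.
\end{align*}
Here I have used that $T_{11}, T_{10} \subseteq B_m$ (since both types have $b_p > 0$), so no extra intersections are needed.

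The key step is now to apply Corollary~\ref{coro_f}: $f_{v_i}$ restricts to a bijection $T_{ij} \to T_{ij}$ for each $ij$, and in particular to self-bijections of $T_{11}$ and of $T_{10}$. Substituting $p \mapsto f_{v_i}(p)$ in each of the two sums for $\E[b_{v_i, p}]$, and invoking $b_{v_i, f_{v_i}(p)} = b_p$ on $T_{11}$ and $b_{v_i, f_{v_i}(p)} = -b_p$ on $T_{10}$, yields
\[
\sum_{p \in T_{11}} b_{v_i, p} = \sum_{p \in T_{11}} b_p \quad\text{and}\quad \sum_{p \in T_{10}} b_{v_i, p} = -\sum_{p \in T_{10}} b_p,
\]
which upon summing recovers exactly $|B_m| \cdot \E[Y_i]$, proving the lemma.

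The main obstacle is already resolved by Corollary~\ref{coro_f}: finding a bijection that simultaneously preserves the type (so the change of variable stays within each summation domain) and swaps $b_{v_i, p}$ with $b_p$ up to the correct sign. Once this is granted, the remainder of the proof is a clean change-of-variable computation, with the only subtlety being to verify that the indifferent case ($b_{v_i, p} = 0$) contributes zero on both sides so that it can safely be dropped before invoking the bijection.
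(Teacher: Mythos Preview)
Your proof is correct and follows essentially the same approach as the paper: both partition the nontrivial contributions over $B_m$ according to whether voter $i$ approves or disapproves (your $T_{11}, T_{10}$ are exactly the paper's $B_+, B_-$), and both apply the type-preserving self-bijection $f_{v_i}$ from Corollary~\ref{coro_f} as a change of variable to convert $b_{v_i,p}$ into $\pm b_p$. Your treatment is slightly more explicit about why the indifferent case $b_{v_i,p}=0$ can be dropped, but otherwise the arguments coincide.
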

\begin{proof} Let $B_{+}$ and $B_{-}$ be the sets of proposals in $B_m$ that $i$ approves and disapproves of, respectively. For brevity, write $v = v_i$. Then, we can write:
\begin{gather*}
\E[b_{v, p}] = |B_m|^{-1}\sum_{p \in B_m}{b_{v, p}} = |B_m|^{-1}\left(\sum_{p \in B_+}{b_{v, p}} + \sum_{p \in B_-}{b_{v, p}}\right) \\
%= |B_m|^{-1}\left(\sum_{p \in B_+}{b_{v, p}} - \sum_{p \in B_-}{(-b_{v, p})}\right) \\
= |B_m|^{-1}\left(\sum_{p \in B_+}{b_{f_{v}(p)}} - \sum_{p \in B_-}{b_{f_{v}(p)}}\right)
\end{gather*}
For the last equality, we used Corollary \ref{coro_f} and the fact that $f_v$ is self-inverse. Moreover, from the same we know that $p \in B_s$ iff $f_{v}(p) \in B_s$, for any $s \in \{+, -\}$. Since $f_{v}$ is a bijection, we can therefore make the change of variable $p \mapsto f_{v}(p)$ in both sums to get:
\begin{gather*}
= |B_m|^{-1}\left(\sum_{p \in B_+}{b_p} - \sum_{p \in B_-}{b_p}\right) = \E[Y_i]
\end{gather*}
The fact that the last sum equals $\E[Y_i]$ followed immediately from the definition of $Y_i$.
\end{proof}

From this, out main result follows:

\begin{theorem} \label{th_maj_sup_outcome_exists} Assuming $\Delta \geq 0$, there exists a non-losing proposal $p \in B_m.$ Assuming $\Delta > 0$, there exists a winning proposal $p \in B_m.$
\end{theorem}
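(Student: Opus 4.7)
The plan is to extract the theorem directly from Theorem~\ref{th_exp_y} via a standard averaging argument on Thought Experiment TE2. The crucial bridge is the identity $Y = b_p \cdot b_{p, \calP}$ noted right before Theorem~\ref{th_exp_y}, combined with the observation that every $p \in B_m$ has $b_p = 2|p| - t > 0$ (because $|p| > t/2$ by definition of $B_m$). This means that, pointwise on $B_m$, the sign of the random variable $Y$ at a fixed $p$ equals the sign of $b_{p, \calP}$.

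First I would handle the non-strict case. Under the assumption $\Delta \geq 0$, Theorem~\ref{th_exp_y} gives $\E[Y] = \Delta\,|B_m|^{-1}\binom{t - 1}{\floor*{t/2}} \geq 0$, where I should note in passing that $B_m$ is nonempty (for any $t \geq 1$ the all-ones vector lies in $B_m$), so $\E[Y]$ is well-defined. Since the uniform average of $Y$ over $B_m$ is nonnegative, there must exist some $p \in B_m$ with $Y(p) \geq 0$; dividing by the strictly positive quantity $b_p$ yields $b_{p, \calP} \geq 0$, so $p$ is non-losing, as required.

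Second, for the strict version under $\Delta > 0$, the same computation gives $\E[Y] > 0$, so some $p \in B_m$ witnesses $Y(p) > 0$; the same division by $b_p > 0$ then gives $b_{p, \calP} > 0$, i.e., $p$ is winning. I do not anticipate any real obstacle in this step: all of the genuine work has already been done in establishing $\E[X] = \E[Y]$ via the voter-wise symmetry $f_v$ and in the computation of $\E[X]$, so the final inference is just reading off the sign of an expectation and exploiting the positivity of $b_p$ on $B_m$.
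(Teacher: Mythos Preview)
Your proposal is correct and follows essentially the same approach as the paper: invoke Theorem~\ref{th_exp_y} to get the sign of $\E[Y]$, use $Y = b_p \cdot b_{p,\calP}$ with $b_p > 0$ on $B_m$, and apply the averaging argument to extract a witness $p$ with the desired sign of $b_{p,\calP}$. Your additional remark that $B_m$ is nonempty is a harmless clarification not spelled out in the paper's version.
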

\begin{proof} From Theorem \ref{th_exp_y} we get that $\E[Y] = \Delta|B_m|^{-1}\binom{t - 1}{\floor*{t/2}}$, which is $\geq 0$ when $\Delta \geq 0$ and $> 0$ when $\Delta > 0$. As a result, given that $Y = b_p \cdot b_{p, \calP}$, where $b_p = 2|p| - t > 0$ for $p \in B_m$, it follows there is a proposal $p \in B_m$ with $b_{p, \calP} \geq 0$ for $\Delta \geq 0$ and $b_{p, \calP} > 0$ for $\Delta > 0,$ completing the proof.
\end{proof} 

\subsection{Polynomial Computation of Winning Proposals}

From Theorem \ref{th_maj_sup_outcome_exists} we know that a non-losing (winning) proposal $p \in B_m$ always exists, but a polynomial algorithm for finding it is not guaranteed. In this section, we provide two such algorithms:~a simple and relatively efficient randomized algorithm with expected polynomial runtime for odd $t$ and $\Delta > 0$, as well as a more intricate deterministic polynomial-time algorithm for the general case. 

We begin with a lemma which will be useful for both algorithms. Introduce the notation $\P_k[\cdot] = \P[\cdot \text{ given }|p| = k]$ and similarly $\E_k.$

\begin{lemma} \label{lemma_exists_k} There exists $k \geq \floor*{t / 2} + 1$ s.t.~$\E_k[b_{p, \calP}] \geq \frac{\Delta}{t|B_m|}\binom{t - 1}{\floor*{t / 2}}.$
\end{lemma}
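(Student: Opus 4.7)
The plan is a weighted averaging argument built directly on top of Theorem~\ref{th_exp_y}. Since $Y = b_p \cdot b_{p,\calP} = (2|p|-t)\cdot b_{p,\calP}$ and the factor $2|p|-t$ depends on $p$ only through $|p|$, the law of total expectation rewrites the thought-experiment expectation as
\[
\E[Y] \;=\; \sum_{k=\floor*{t/2}+1}^{t} \P(|p|=k)\,(2k-t)\,\E_k[b_{p,\calP}].
\]
Introduce weights $w_k := \P(|p|=k)(2k-t)$. These are non-negative on the support of $|p|$, because $p \in B_m$ forces $k > t/2$ and hence $2k-t > 0$; this non-negativity is the key property that makes the pigeonhole step below clean.

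Let $M := \max_k \E_k[b_{p,\calP}]$ over the support, achieved at some $k^*$. By monotonicity of multiplication by a non-negative quantity, $w_k \E_k[b_{p,\calP}] \leq w_k M$ term by term, so
\[
\E[Y] \;\leq\; M \sum_k w_k \;=\; M\cdot \E[\,2|p|-t\,] \;\leq\; M \cdot t,
\]
where the last step uses $|p| \leq t$. Substituting the value $\E[Y] = \Delta|B_m|^{-1}\binom{t-1}{\floor*{t/2}}$ supplied by Theorem~\ref{th_exp_y} and rearranging yields
\[
\E_{k^*}[b_{p,\calP}] \;=\; M \;\geq\; \frac{\E[Y]}{t} \;=\; \frac{\Delta}{t|B_m|}\binom{t-1}{\floor*{t/2}},
\]
so $k^*$ witnesses the conclusion of the lemma.

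The only subtlety I anticipate is that individual $\E_k[b_{p,\calP}]$ could a priori be negative, which might make a naive ``averaging'' framing awkward. The non-negativity of the weights $w_k$ removes this concern: the inequality $\E_k[b_{p,\calP}] \leq M$ survives multiplication by $w_k \geq 0$, and the resulting bound $\E[Y] \leq Mt$ is consistent with $\E[Y] \geq 0$ (which follows from $\Delta \geq 0$), forcing $M \geq 0$ automatically. Beyond this bookkeeping, there is no substantive obstacle — the lemma is essentially a single application of weighted monotonicity to the identity proved in Theorem~\ref{th_exp_y}.
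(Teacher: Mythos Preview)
Your proof is correct and takes essentially the same approach as the paper: both decompose $\E[Y]$ via the law of total expectation over $|p|=k$, use that $Y=(2|p|-t)\,b_{p,\calP}$, and invoke the bound $2k-t\leq t$. The only cosmetic difference is that the paper first picks $k$ maximizing $\E_k[Y]$ and then divides by $2k-t$, whereas you fold the factor $2k-t$ into the weights and bound $\sum_k w_k=\E[2|p|-t]\leq t$ directly; the underlying averaging idea is identical.
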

\begin{proof} Write the expectation of $Y$ as follows:
\begin{gather*}
\E[Y] = \sum_{k = \floor*{t / 2} + 1}^t \E_k[Y] \P(|p| = k)
\end{gather*}
From Theorem \ref{th_exp_y}, we know that $\E[Y] = \Delta |B_m|^{-1}\binom{t - 1}{\floor*{t / 2}}$. Since the sum of the $\P(|p| = k)$ coefficients is $1$, this means that there is some number $k \geq \floor*{t / 2} + 1$ such that $\E_k[Y] \geq \Delta |B_m|^{-1}\binom{t - 1}{\floor*{t / 2}}$. Since $Y = b_p \cdot b_{p, \calP} = (2|p| - t)\cdot b_{p, \calP}$, this means that $\E_k[Y] = (2k - t)\E_k[b_{p, \calP}]$, from which
\begin{gather*}    
\E_k[b_{p, \calP}] = \frac{\Delta}{(2k - t)|B_m|}\binom{t - 1}{\floor*{t / 2}} \geq \frac{\Delta}{t|B_m|}\binom{t - 1}{\floor*{t / 2}} \qedhere 
\end{gather*}
\end{proof}

\paragraph{\textbf{Randomized Algorithm}} Here we assume that $t$ is odd and $\Delta > 0$. Since $t$ is odd, note that $|B_m|^{-1} = 2^{1 - t},$ and, moreover, that $a_{p, \calP} = n - d_{p, \calP},$ from which $b_{p, \calP} = 2a_{p, \calP} - n.$ Additionally, note that in this case $\P_k(b_{p, \calP} > 0) = \P_k(a_{p, \calP} > n / 2).$ From this, employing Markov's inequality and bounding the central binomial coefficient using a Stirling-type result leads to the following:

\begin{lemma} \label{lemma_exists_k_implies_a} Assume $k$ is such that $\E_k[b_{p, \calP}] \geq \frac{\Delta 2^{1 - t}}{t}\binom{t - 1}{\floor*{t / 2}},$ then $\P_k(b_{p, \calP} > 0) \geq \sqrt{\frac{2}{\pi}}\frac{\Delta}{nt^{3/2}}.$
\end{lemma}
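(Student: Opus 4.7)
The plan is to split the proof into two clean steps: a reverse-Markov-type conversion from an expectation bound to a probability bound, followed by a sharp central-binomial-coefficient bound of Wallis type.

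First, since $t$ is odd, nobody abstains, so $a_{p,\calP} + d_{p,\calP} = n$ and $b_{p,\calP} = 2a_{p,\calP} - n \in [-n, n]$. The auxiliary variable $n - b_{p,\calP}$ is non-negative, so ordinary Markov applied at threshold $n$ gives $\P_k(n - b_{p,\calP} \geq n) \leq (n - \E_k[b_{p,\calP}])/n$. But $n - b_{p,\calP} \geq n$ iff $b_{p,\calP} \leq 0$, which rearranges to
\[
\P_k(b_{p,\calP} > 0) \;\geq\; \frac{\E_k[b_{p,\calP}]}{n} \;\geq\; \frac{\Delta \cdot 2^{1-t}}{n t}\binom{t-1}{\floor*{t/2}},
\]
where the second inequality is the hypothesis of the lemma.

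Second, I need to show $\binom{t-1}{(t-1)/2} \geq \sqrt{2/(\pi t)} \cdot 2^{t-1}$. Writing $t - 1 = 2m$ (so $2m+1 = t$), this is equivalent to $\binom{2m}{m} \geq \sqrt{2/(\pi(2m+1))} \cdot 4^m$. I would prove this by showing that the sequence $a_m := (2m+1)\binom{2m}{m}^2/16^m$ is strictly decreasing and tends to $2/\pi$ (Wallis' product), hence $a_m \geq 2/\pi$ for all $m \geq 0$. Monotonicity is a one-line check: using $\binom{2m+2}{m+1} = \tfrac{2(2m+1)}{m+1}\binom{2m}{m}$ one computes
\[
\frac{a_{m+1}}{a_m} \;=\; \frac{(2m+3)(2m+1)}{4(m+1)^2} \;=\; 1 - \frac{1}{4(m+1)^2} \;<\; 1,
\]
and the limit $2/\pi$ is Wallis' product. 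Squaring and rearranging $a_m \geq 2/\pi$ yields the desired binomial inequality.

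Combining the two pieces, $2^{1-t} \cdot 2^{t-1} = 1$, so
\[
\P_k(b_{p,\calP} > 0) \;\geq\; \frac{\Delta \cdot 2^{1-t}}{nt}\cdot \sqrt{\frac{2}{\pi t}}\cdot 2^{t-1} \;=\; \sqrt{\frac{2}{\pi}}\cdot \frac{\Delta}{n\, t^{3/2}}.
\]
The main obstacle is pinning down the constant in the central binomial bound: a crude estimate such as $\binom{2m}{m} \geq 4^m/(2\sqrt{m})$ gives the wrong constant (and in fact the wrong asymptotics), so the Wallis-monotonicity argument is essential to recover exactly $\sqrt{2/\pi}$.
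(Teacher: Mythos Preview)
Your proof is correct and follows essentially the same approach as the paper: a Markov-type bound converts the expectation assumption into a probability lower bound, and then the sharp lower bound $\binom{2m}{m}\ge 4^m/\sqrt{\pi(m+1/2)}$ (which is exactly your inequality $a_m\ge 2/\pi$ rewritten) finishes. The only cosmetic differences are that the paper routes Markov through $a_{p,\calP}=n-a_{p,\calP}$ at threshold $n/2$ rather than directly through $n-b_{p,\calP}$ at threshold $n$ (the two are identical after the substitution $b_{p,\calP}=2a_{p,\calP}-n$), and that the paper quotes the binomial estimate while you supply a self-contained Wallis-monotonicity proof.
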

\begin{proof}
Recall that $b_{p, \calP} = 2a_{p, \calP} - n$, so we can write
\begin{gather*}    
\E_k[b_{p, \calP}] = 2\E_k[a_{p, \calP}] - n \iff \E_k[a_{p, \calP}] = \frac{1}{2}\E_k[b_{p, \calP}] + \frac{n}{2}
\end{gather*}
meaning that
\begin{gather*}
\E_k[a_{p, \calP}] \geq \frac{n}{2} + \frac{\Delta 2^{-t}}{t}\binom{t - 1}{\floor*{t / 2}}
\end{gather*}
Now, note that $\P_k(b_{p, \calP} > 0) = \P_k(a_{p, \calP} > n/2) = 1 - \P_k(a_{p, \calP} \leq n/2) = 1 - \P_k(n - a_{p, \calP} \geq n/2).$ Applying Markov's Inequality, we get that $\P_k(n - a_{p, \calP} \geq n/2) \leq \frac{2}{n}\E_k[n - a_{p, \calP}] = \frac{2}{n}(n - \E_k[a_{p, \calP}]) = 2 - \frac{2}{n}\E_k[a_{p, \calP}].$ Overall, this means that $\P_k(a_{p, \calP} > n/2) \geq 1 - (2 - \frac{2}{n}\E_k[a_{p, \calP}]) = -1 + \frac{2}{n}\E_k[a_{p, \calP}] \geq \frac{\Delta2^{1 - t}}{nt}\binom{t - 1}{\floor*{t / 2}}$. We now use the following tight estimation of the central binomial coefficient:
\begin{equation*}
\frac{4^m}{\sqrt{\pi(m + \frac{1}{2})}} \leq \binom{2m}{m} \leq \frac{4^m}{\sqrt{\pi m}}\quad 
\text{(for all $m \geq 1$)}
\end{equation*}
to get that
\begin{gather*}
\P_k(a_{p, \calP} > n/2) \geq  \frac{\Delta2^{1 - t}}{nt}\frac{2^{t - 1}}{\sqrt{\pi(\floor*{t / 2} + \frac{1}{2})}} = \frac{\Delta}{nt\sqrt{\pi(\floor*{t / 2} + \frac{1}{2})}} \\
= \sqrt{\frac{2}{\pi}}\frac{\Delta}{nt^{3/2}} \qedhere
\end{gather*}
\end{proof}

Armed as such, we now give our randomized algorithm in the following theorem.

\begin{theorem} For odd $t$, a winning proposal $p \in B_m$ can be found in expected time $O(n^2t^{5/2}/\Delta)$. If only $\Delta > 0$ is guaranteed, this is $O(n^2t^{7/2})$. However, if each column has more ones than zeros, then $\Delta \geq t$, so the algorithm runs in expected time $O(n^2t^{5/2})$.
\end{theorem}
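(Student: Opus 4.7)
My plan is to prove the theorem via the following simple randomized algorithm: repeatedly sample $p$ uniformly at random from $B_m$, outputting $p$ as soon as $b_{p,\calP}>0$. For odd $t$, one can realize a uniform draw from $B_m$ by sampling $p\in\B^t$ uniformly and replacing it with its complement when $|p|\leq\floor*{t/2}$ (no ties arise, since $t$ is odd), all at cost $O(t)$; then the check $b_{p,\calP}>0$ takes $O(nt)$ by comparing $p$ against every voter. So each iteration costs $O(nt)$.

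Correctness is immediate from Theorem~\ref{th_maj_sup_outcome_exists}. The heart of the analysis is a bound on the per-iteration success probability, obtained by a first-moment argument on $Y=b_p\cdot b_{p,\calP}$. By Theorem~\ref{th_exp_y}, $\E[Y]=\Delta\,|B_m|^{-1}\binom{t-1}{(t-1)/2}$; for odd $t$, using $|B_m|=2^{t-1}$ together with the central-binomial lower bound $\binom{2m}{m}\geq 4^m/\sqrt{\pi(m+\tfrac12)}$ already invoked in the proof of Lemma~\ref{lemma_exists_k_implies_a}, this simplifies to $\E[Y]\geq \sqrt{2/\pi}\,\Delta/\sqrt{t}$. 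Because $p\in B_m$ with odd $t$ has $b_p\geq 1$, $Y$ has the same sign as $b_{p,\calP}$, so $\P(b_{p,\calP}>0)=\P(Y>0)$; combining with the deterministic bound $Y\leq nt$ and the elementary inequality $\E[Y]\leq \P(Y>0)\cdot \max Y$ then yields $\P(b_{p,\calP}>0)\geq \sqrt{2/\pi}\,\Delta/(nt^{3/2})$.

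The number of iterations until success is therefore stochastically dominated by a geometric random variable of mean $O(nt^{3/2}/\Delta)$, and multiplying by the $O(nt)$ cost per iteration gives the claimed expected total $O(n^2 t^{5/2}/\Delta)$. The two specialized bounds then follow by substituting the respective lower bounds on $\Delta$; in the column-majority case one observes that every column of $\calP$ with strictly more ones than zeros contributes at least $1$ to $\Delta$, giving $\Delta\geq t$. The one non-routine step in this plan is the first-moment-to-probability conversion: it is crucial that $b_p\geq 1$ for $p\in B_m$ with odd $t$, since this is what makes $Y$ and $b_{p,\calP}$ share signs and lets us translate the expectation bound into a probability bound without losing an extra factor of $t$.
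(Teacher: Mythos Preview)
Your argument is correct but takes a genuinely different route from the paper. The paper's algorithm, in each round, samples one proposal $p_k$ uniformly from $\{p:|p|=k\}$ for every $k\in\{\floor*{t/2}+1,\dots,t\}$ and returns the first winner; it invokes Lemmas~\ref{lemma_exists_k} and~\ref{lemma_exists_k_implies_a} (the latter via Markov's inequality applied to $n-a_{p,\calP}$, conditional on $|p|=k^*$) to show that for some unknown $k^*$ the per-round success probability is at least $\sqrt{2/\pi}\,\Delta/(nt^{3/2})$, but pays $O(nt^2)$ per round because about $t/2$ samples are checked. You instead draw a single $p$ uniformly from all of $B_m$ and bound $\P(b_{p,\calP}>0)=\P(Y>0)$ directly through the elementary first-moment inequality $\E[Y]\leq(\max Y)\cdot\P(Y>0)$, obtaining the same success probability $\sqrt{2/\pi}\,\Delta/(nt^{3/2})$ with only $O(nt)$ work per iteration. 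This saves a clean factor of $t$: your total is $O(n^2t^{5/2}/\Delta)$, matching the first clause of the theorem as written, whereas the paper's own proof actually delivers $O(n^2t^{7/2}/\Delta)$ (there is an internal inconsistency in the paper between the statement and its proof). Consequently, your substitutions $\Delta\geq 1$ and $\Delta\geq t$ yield $O(n^2t^{5/2})$ and $O(n^2t^{3/2})$, which are stronger than, and hence imply, the stated $O(n^2t^{7/2})$ and $O(n^2t^{5/2})$. The paper's slice-by-$k$ sampling does have one advantage: it reuses the same conditional-expectation machinery ($\E_k[b_{p,\calP}]$) that later drives the deterministic algorithm, so conceptually it dovetails with the derandomization; your approach is more self-contained and, for this randomized result, simply tighter.
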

\begin{proof} Introduce the notation $K = \{\floor*{t / 2} + 1, \ldots, t\}$. We proceed in rounds. In each round we sample proposals $(p_k)_{k \in K}$ such that $|p_k| = k$ uniformly at random. If any of the sampled proposals is winning, then we stop and return that proposal. Otherwise, we proceed to the next round. Each round takes time $O(nt^2)$ to execute, so we are left with bounding the expected number of rounds. From Lemmas \ref{lemma_exists_k} and \ref{lemma_exists_k_implies_a} there is $k^* \in K$ such that $\P_k(b_{p, \calP} > 0) \geq \sqrt{\frac{2}{\pi}}\frac{\Delta}{nt^{3/2}}$, meaning that in each round the sampled proposal $p_{k^*}$ will be winning with at least this probability. As a result, by the expectation of the geometric distribution, the expected number of rounds until $p_{k^*}$ is winning is $O(nt^{3/2}/\Delta),$ which is also an upper-bound on the expected number of rounds. Therefore, our algorithm runs in time $O(nt^2nt^{3/2}/\Delta) = O(n^2t^{7/2}/\Delta),$ as required.
\end{proof}

\paragraph{\textbf{Deterministic Algorithm}}
Recall that a proposal is a mapping $p : [t] \to \B$. A \emph{partial proposal} is a mapping $p : [t] \to \{0, 1, \Qm\}.$ Seeing $p$ as a partial function, the \emph{domain} of $p$ is $\calD_p = \{i \in [t] : p_i \neq \Qm\}.$ Partial proposal $p$ can also be seen as the set $\{i \mapsto p_i : i \in \calD_p\}$. We extend our previous notation $|p|$ consistently to mean $|\{i : p_i = 1\}|.$ The \emph{union} (in the sense of sets) $p_1 \cup p_2$ of two partial proposals $p_1$ and $p_2$ is defined whenever $\calD_{p_1} \cap \calD_{p_2} = \varnothing.$ We say that a partial proposal $p'$ \emph{refines} a partial proposal $p$ if for all $i \in [t]$ we have that either $p_i = p'_i$ or $p_i = \Qm$. Given a partial proposal $p$, one way to refine it is to pick $i \notin \calD_p$ and $b \in \B$ and assign $p_i \gets b$; under the set notation, the refined proposal is written $p \cup \{i \mapsto b\}.$

To aid presentation, for a partial proposal $p^*$ introduce the notation $\P_{p^*}[\cdot] = \P[\cdot \text{ given }p \text{ refines }p^*]$ and $\P_{k, p^*}[\cdot] = \P[\cdot \text{ given }|p| = k \text{ and } p \text{ refines }p^*].$ We define $\E_{p^*}$ and $\E_{k, p^*}$ similarly. The proof of the following lemma is mostly a matter of syntactic manipulation.

\begin{lemma} \label{lemma_compute_expected_partial} For any partial proposal $p^* : [t] \to \{0, 1, \Qm\}$ and any number $|p^*| \leq k \leq t - |\calD_{p^*}|$ the expectation $\E_{k, p^*}[b_{p, \calP}]$ can be computed in polynomial time.
\end{lemma}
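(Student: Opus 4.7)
The plan is to use linearity of expectation to reduce the task to per-voter probability computations. Writing
\[
\E_{k, p^*}[b_{p, \calP}] = \sum_{i=1}^n \left(\P_{k, p^*}[v_i \text{ approves } p] - \P_{k, p^*}[v_i \text{ disapproves } p]\right),
\]
it suffices to evaluate each of the $2n$ probabilities on the right in polynomial time.

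For a fixed voter $i$, I would partition the undetermined positions $U = [t] \setminus \calD_{p^*}$ into $U_1 = \{j \in U : v_{i,j} = 1\}$ and $U_0 = U \setminus U_1$, with sizes $u_1$ and $u_0$, and let $a^*$ count the positions inside $\calD_{p^*}$ where $v_i$ agrees with $p^*$. A refinement $p$ of $p^*$ with $|p| = k$ corresponds to choosing which $k - |p^*|$ positions of $U$ are set to $1$; if $j$ of these lie in $U_1$, elementary counting expresses the total number of agreements between $v_i$ and $p$ as the affine function $A(j) = a^* + u_0 + |p^*| - k + 2j$, and the number of refinements with that value of $j$ equals $\binom{u_1}{j}\binom{u_0}{k - |p^*| - j}$, out of a total of $\binom{u_1 + u_0}{k - |p^*|}$ equally likely refinements.

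With these ingredients in place, voter $i$ approves exactly when $A(j) > t/2$ and disapproves exactly when $A(j) < t/2$, each of which translates into an explicit threshold on $j$. Hence each of the two probabilities is expressible as a sum of at most $t + 1$ binomial ratios indexed by $j$, evaluable in polynomial time; summing over the $n$ voters yields the lemma.

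The main obstacle I foresee is purely bookkeeping: correctly enforcing the support constraints $\max(0, k - |p^*| - u_0) \leq j \leq \min(u_1, k - |p^*|)$ together with the strict-versus-weak inequalities for the approval and disapproval thresholds, uniformly across both parities of $t$. No deeper idea beyond linearity of expectation and elementary binomial counting is required.
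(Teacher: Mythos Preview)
Your proposal is correct and essentially identical to the paper's proof: both apply linearity of expectation over voters, partition the undetermined positions $[t]\setminus\calD_{p^*}$ according to the voter's bit, and then count refinements via a sum of products of two binomial coefficients indexed by the number of ones placed in one part. The paper uses variables $(\alpha,\beta,m,x,y)$ where you use $(u_1,u_0,a^*,j,k-|p^*|-j)$, but the computation is the same; your single-index formulation is arguably tidier since the constraint $|p|=k$ already pins one variable in terms of the other.
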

\begin{proof}
As usual, by linearity of expectation, $\E_{k, p^*}[b_{p, \calP}]$ can be written as the sum $\sum_{i = 1}^{n}\E_{k, p^*}[Z_i]$, where $Z_i$ is $1$ if $i$ approves $p$, $-1$ if they disapprove it, and $0$ otherwise. Assume that agent $i$'s vote is $v_i$ with $|v_i| = \ell$. Let $B_{+}$ be the set of proposals $p$ refining $p^*$ with $|p| = k$ that $i$ approves of. Likewise, let $B_{-}$ be the set of proposals $p$ refining $p^*$ with $|p| = k$ that $i$ disapproves of. Now, we can write:
\begin{gather*}
\E_{k, p^*}[Z_i] = \binom{t - |\calD_{p^*}|}{k - |p^*|}^{-1}(|B_{+}| - |B_{-}|)
\end{gather*}
Binomial coefficients can be computed in polynomial time, so we are now left with showing how to compute $|B_{+}|$ and $|B_{-}|$ in polynomial time. We begin with $B_{+}.$ First, let us determine the number of matches between $v$ and $p^*$ on issues in $\calD_{p^*},$ which is $m = |\{j \in \calD_{p^*} : v_j = p^*_j\}|.$ Then, let $\alpha = |v \setminus \calD_{p^*}|$ be the number of issues not in $\calD_{p^*}$ which $i$ approves of, and $\beta = t - |\calD_{p^*}| - \alpha$ be the number of issues not in $\calD_{p^*}$ which $i$ disapproves of. For an arbitrary proposal $p$, define $x(p) = |p \setminus \calD_{p^*}|$ and $y(p) = t - |\calD_{p*}| - |p \setminus \calD_{p^*}|$, representing the number of issues not in $\calD_{p^*}$ with a one/zero at the corresponding position in $p$. Note that the total number of proposals $p$ refining $p^{*}$ with $x(p) = x$ and $y(p) = y$ for some numbers $x, y$ is $\binom{\alpha}{x}\binom{\beta}{y},$ assuming that the binomial coefficients are $0$ whenever undefined. These being said, $p \in B_{+}$ holds if and only if $|p^*| + x(p) + y(p) = k$ and $m + x(p) + \alpha + \beta - y(p) > t / 2.$ Therefore, to compute $|B_{+}|$ we can iterate over all pairs $(x, y)$ with $0 \leq x, y \leq \alpha + \beta$ that satisfy $|p^*| + x + y = k$ and $m + x + \alpha + \beta - y > t / 2$ (of which there are polynomially many), adding up the value $\binom{\alpha}{x}\binom{\beta}{y}$ for each. For $B_{-},$ the only difference is that the second condition becomes $m + x + \alpha + \beta - y < t / 2.$
\end{proof}

With this in mind, we can now state and prove our main result.

\begin{theorem} There is a polynomial-time deterministic algorithm that computes a non-losing policy $p \in B_m.$ If $\Delta > 0,$ then the computed policy is winning.
\end{theorem}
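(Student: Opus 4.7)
The plan is to derandomize the probabilistic argument via the method of conditional expectations, with Lemma \ref{lemma_compute_expected_partial} acting as a polynomial-time oracle for conditional expectations of $b_{p, \calP}$ under any partial assignment.

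First I would fix a good target cardinality $k$. By Lemma \ref{lemma_exists_k}, there exists $k \in \{\floor*{t/2} + 1, \ldots, t\}$ with $\E_k[b_{p, \calP}] \geq \frac{\Delta}{t|B_m|}\binom{t-1}{\floor*{t/2}}$, which is non-negative in general and strictly positive when $\Delta > 0$. I would identify such a $k$ by evaluating $\E_k[b_{p, \calP}]$ via Lemma \ref{lemma_compute_expected_partial} (taking $p^* = \varnothing$) for each candidate value in the range and picking the maximiser; the chosen $k$ is then guaranteed to meet the lower bound above.

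Next I would construct $p$ issue by issue using standard conditional-expectation derandomisation. Start from the empty partial proposal $p^* = \varnothing$; at each step, pick an unassigned index $i \notin \calD_{p^*}$ (say the smallest one), and, for each $b \in \B$ whose assignment is feasible given the target $|p| = k$ (i.e., such that $p^* \cup \{i \mapsto b\}$ still admits a completion with exactly $k$ ones), evaluate $\E_{k, p^* \cup \{i \mapsto b\}}[b_{p, \calP}]$ using Lemma \ref{lemma_compute_expected_partial}; then set $p^*_i$ to the value of $b$ achieving the larger expectation. By the law of total expectation,
\[
\E_{k, p^*}[b_{p, \calP}] = \sum_{b}\P_{k, p^*}[p_i = b] \cdot \E_{k, p^* \cup \{i \mapsto b\}}[b_{p, \calP}],
\]
so the maximum term on the right is at least the left-hand side, and the conditional expectation therefore weakly increases at every step.

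After $t$ iterations $p^*$ becomes a complete proposal $p$ with $|p| = k \geq \floor*{t/2} + 1$, so $p \in B_m$; and the running invariant yields $b_{p, \calP} = \E_{k, p}[b_{p, \calP}] \geq \E_k[b_{p, \calP}]$, which is $\geq 0$ in general and $> 0$ when $\Delta > 0$. The procedure makes $O(t)$ oracle calls to choose $k$ and $O(t)$ additional calls during the refinement, each polynomial by Lemma \ref{lemma_compute_expected_partial}, so the total runtime is polynomial. The one technical point to verify is that at every step at least one choice of $b$ remains feasible; this follows from the maintained invariant $|p^*| \leq k \leq |p^*| + t - |\calD_{p^*}|$, since simultaneous infeasibility of both $b = 0$ and $b = 1$ would force $|\calD_{p^*}| = t$, meaning $p^*$ is already complete and no further decision is required.
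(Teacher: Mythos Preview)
Your proposal is correct and follows essentially the same approach as the paper: both fix a good cardinality $k$ using Lemma \ref{lemma_exists_k} together with Lemma \ref{lemma_compute_expected_partial}, then derandomize via the method of conditional expectations, refining one issue at a time while maintaining that $\E_{k, p^*}[b_{p, \calP}]$ stays at or above its initial value. The only cosmetic differences are that you pick the maximising $k$ (the paper picks any $k$ meeting the bound) and you frame feasibility in terms of the range constraint $|p^*| \leq k \leq |p^*| + t - |\calD_{p^*}|$ rather than singling out the case $|p_1| > k^*$.
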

\begin{proof}
We only show the ``non-losing'' part of the assertion. For the ``winning'' part. just replace non-negative with positive in the following. 
First, for all values $k$ such that $t / 2 < k \leq t$ we compute the values $\E_k[b_{p, \calP}]$. This can be done in polynomial time by invoking Lemma \ref{lemma_compute_expected_partial} with $p^* = \varnothing$. By Lemma \ref{lemma_exists_k}, for at least one such $k$ the computed expectation will be non-negative, so take $k^*$ to be one such $k$.
%If we are interested in empirically trying to maximize the number of ones in the output proposal, we can take $k^*$ to be the largest one with non-negative expectation, but this would not improve the worst-case behaviour.
Afterwards, the algorithm will build the output policy iteratively, initially starting with $p^* = \varnothing$. At step $1 \leq i \leq t$ of the algorithm, the current proposal $p^*$ will have $\calD_{p*} = \{1, \ldots, i - 1\}$, with the running invariant that $\E_{k^*, p^*}[b_{p, \calP}] \geq 0$. In step $i$, we consider two refinements of $p^*$, namely $p_0 = p^* \cup \{i \mapsto 0\}$ and $p_1 = p^* \cup \{i \mapsto 1\}$. With them, we can write:
\begin{gather*}
    \E_{k^*, p^*}[b_{p, \calP}] = \sum_{j \in \{0, 1\}}\E_{k^*, p_j}[b_{p, \calP}]\P_{k^*, p^*}(p_i = j)
\end{gather*}
From the invariant we know that $\E_{k^*, p^*}[b_{p, \calP}] \geq 0$, and, since $\sum_{j \in \{0, 1\}}\P_{k^*, p^*}(p_i = j) = 1$, we get that for some $j \in \{0, 1\}$ it holds that $\E_{k^*, p_j}[b_{p, \calP}] \geq 0$. Since the two values $\E_{k^*, p_j}[b_{p, \calP}]$ can be computed in polynomial time using Lemma \ref{lemma_compute_expected_partial}, we can thus take $j^*$ to be such that $\E_{k^*, p_{j^*}}[b_{p, \calP}] \geq 0$.
%As before, using the larger of the two values would heuristically improve the margin by which our output proposal defeats its opposite, but at no theoretical improvement.
We can now set $p^* \gets p_{j^*}$ and continue with the algorithm. One technical caveat is the situation where $|p_1| > k^*$, in which one expectation is not defined. In this case we just take $j^* = 0$ without computing any expectation. At the end, proposal $p^*$ will be complete and $\E_{k^*, p^*}[b_{p, \calP}] \geq 0$ will hold, meaning that $b_{p^*, \calP} \geq 0,$ so we can output $p^*.$
\end{proof}

One downside of the deterministic algorithm is that it requires high-precision arithmetic to execute. Indeed, the expectations are rational numbers with denominator/numerators on the order $\approx 2^t.$ This makes an efficient implementation tricky, but achievable in polynomial time if we work over the integers with the expectations multiplied by the common denominator. We omit these details. For the randomized algorithm, one might rightfully ask whether it can be optimized by computing $k^*$ in advance and only sampling for it. The answer is that this requires essentially the same machinery as the deterministic algorithm, making the time complexity less attractive.

Finally, note that our polynomial algorithms output non-losing proposals with at least $\floor*{t / 2} + 1$ agreements to \IWM, and deciding whether at least $\floor*{t / 2} + 2$ is possible is NP-hard, by Theorem \ref{th_main_np_hardness_result}. Perhaps counterintuitively, this does not mean that our algorithms output proposals with exactly $\floor*{t / 2} + 1$ agreements, as such proposals may actually not always exist (see appendix \ref{sect_appendix} for an example).

\section{Conclusions and Future Work}

We studied the problem of determining a policy minimizing the distance to issue-wise majority, while at the same time surviving the final vote of the assembly to approve it as the outcome. In essence, our results establish a tight dichotomy:~distance at most $\floor*{(t - 1) / 2}$ can always be achieved in polynomial time, while deciding whether a better distance is possible is NP-hard. It would be natural to reexamine our results through the lens of general judgement aggregation (i.e., dependent issues) to identify which extensions are possible. Moreover, we assumed that voters weigh all of the issues equally, but it would be of increased significance to study a setup where voters give issue-importance scores together with their ballots. Additionally, some voters might be pickier than others, requiring significantly more than $50\%$ agreement with their preferences in order to support a proposal. It would be interesting to also incorporate such behaviour into our model. The approval ``supports/opposes'' paradigm can also be replaced by other voting mechanisms, perhaps also defined in terms of the Hamming distance, but without a fixed approval ``threshold''. Finally, the setup of non-binary issues would also be interesting to investigate.

The Ostrogorski paradox generalizes Anscombe's. In particular, in Anscombe's paradox, issue-wise majority loses against the opposite (issue-wise minority) proposal, while in Ostrogorski's issue-wise majority loses against an arbitrary proposal. In fact, Ostrogorski's paradox can be seen to be equivalent to the Condorcet paradox restricted to our setup (see, e.g., \cite{paradoxes_explained}). An algorithmic version of Ostrogorski's paradox asks the following:~given the voter preferences, is there a policy defeating all other policies when faced in 1-vs-1 match-ups? It is known that this policy, if it exists, has to be issue-wise majority \cite{paradoxes_explained}, but determining whether this is the case seems interesting and computationally non-trivial.

%%% The acknowledgements section is defined using the "acks" environment
%%% (rather than an unnumbered section). The use of this environment 
%%% ensures the proper identification of the section in the article 
%%% metadata as well as the consistent spelling of the heading.

\begin{acks}
We thank Robin Fritsch and Edith Elkind for the many fruitful discussions regarding this work. We additionally thank Robin Fritsch for pointing out a case where a non-losing proposal with $\floor*{t / 2} + 1$ agreements with issue-wise majority might not exist. We thank the anonymous reviewers for their constructive feedback and useful suggestions contributing to improving this paper.
\end{acks}

%%%%%%%%%%%%%%%%%%%%%%%%%%%%%%%%%%%%%%%%%%%%%%%%%%%%%%%%%%%%%%%%%%%%%%%%

%%% The next two lines define, first, the bibliography style to be 
%%% applied, and, second, the bibliography file to be used.

\bibliographystyle{ACM-Reference-Format} 
\bibliography{majority}

%%%%%%%%%%%%%%%%%%%%%%%%%%%%%%%%%%%%%%%%%%%%%%%%%%%%%%%%%%%%%%%%%%%%%%%%

\appendix
\section{The Space of Solutions is Disconnected} \label{sect_appendix}

Our two polynomial algorithms are not completely fulfilling, since their correctness ultimately relies on probabilistic techniques. We leave it open whether a fully-constructive algorithm exists. One possible approach one might be tempted to consider is to look at the subgraph induced by non-losing proposals in the hypercube graph $H_t$ with vertex set $\B^t$ and edges between proposals differing in exactly one issue. If the graph is sufficiently connected, then performing a walk directed by $|p|$ might be able to find a proposal with $|p| \geq \floor*{t / 2} + 1.$ Unfortunately, as we show next, the graph might actually have no edges, so a more refined approach would be needed. In particular, we exhibit an instance where a proposal is winning if and only if it agrees with issue-wise majority in an odd number of issues, a combinatorial curiosity which might be of independent interest. For convenience, the result is stated with voters/issues numbered starting with $0.$

\begin{figure}[t]
\centering
\begin{tabular}{c|ccccc}
      & 0 & 1 & 2 & 3 & 4 \\
      \hline
    0 & \tikzmarknode{ex_1_1}{1} & \tikzmarknode{ex_1_2}{1} & \tikzmarknode{ex_1_3}{1} & \tikzmarknode{ex_1_4}{0} & \tikzmarknode{ex_1_5}{0} \\
    1 & \tikzmarknode{ex_2_1}{0} & \tikzmarknode{ex_2_2}{1} & \tikzmarknode{ex_2_3}{1} & \tikzmarknode{ex_2_4}{1} & \tikzmarknode{ex_2_5}{0} \\
    2 & \tikzmarknode{ex_3_1}{0} & \tikzmarknode{ex_3_2}{0} & \tikzmarknode{ex_3_3}{1} & \tikzmarknode{ex_3_4}{1} & \tikzmarknode{ex_3_5}{1} \\
    3 & \tikzmarknode{ex_4_1}{1} & \tikzmarknode{ex_4_2}{0} & \tikzmarknode{ex_4_3}{0} & \tikzmarknode{ex_4_4}{1} & \tikzmarknode{ex_4_5}{1} \\
    4 & \tikzmarknode{ex_5_1}{1} & \tikzmarknode{ex_5_2}{1} & \tikzmarknode{ex_5_3}{0} & \tikzmarknode{ex_5_4}{0} & \tikzmarknode{ex_5_5}{1}
\end{tabular}
\caption{Construction from Theorem \ref{theorem_counterexample} for $t = 5.$ Here all policies $p$ with $|p| \in \{1, 3, 5\}$ are winning, while all with $|p| \in \{0, 2, 4\}$ are losing.}
\label{fig:counterexample}
\Description{Construction used to get an empty graph.}
\begin{tikzpicture}[overlay,remember picture, shorten >=-3pt, shorten <= -3pt]
% Green.
\drawline{ex_1_1}{ex_1_3}{\cone}
\drawline{ex_2_2}{ex_2_4}{\cone}
\drawline{ex_3_3}{ex_3_5}{\cone}
\drawline{ex_4_1}{ex_4_1}{\cone}
\drawline{ex_4_4}{ex_4_5}{\cone}
\drawline{ex_5_1}{ex_5_2}{\cone}
\drawline{ex_5_5}{ex_5_5}{\cone}
% Red.
\drawline{ex_1_4}{ex_1_5}{\czero}
\drawline{ex_2_1}{ex_2_1}{\czero}
\drawline{ex_2_5}{ex_2_5}{\czero}
\drawline{ex_3_1}{ex_3_2}{\czero}
\drawline{ex_4_2}{ex_4_3}{\czero}
\drawline{ex_5_3}{ex_5_4}{\czero}
\end{tikzpicture}
\end{figure}

\begin{theorem} \label{theorem_counterexample}Consider a judgements matrix $\calP$ with odd $t = n,$ where voter $i$ approves of issues $\{i, i + 1, \ldots, i + \floor*{t / 2}\}$, addition of indices being performed modulo $t$. Then, a proposal $p \in \B^t$ wins if and only if $|p|$ is odd. Figure \ref{fig:counterexample} illustrates this construction for $t = 5$.
\end{theorem}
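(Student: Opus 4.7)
The plan is to introduce, for a proposal $p$ viewed as a subset $S$ of the cyclic issue set with $|S| = k$, the balances $c_i = 2|V_i \cap S| - k$, where $V_i = \{i, i+1, \ldots, i+m\}$ is voter $i$'s approval arc and $m = (t-1)/2$. Because $t = 2m+1$ is odd, no voter is ever indifferent, and voter $i$ approves $p$ precisely when $c_i > 0$. The structural observation I would exploit is that the two arcs $V_i$ and $V_{i+m+1}$ overlap in exactly $\{i\}$ and together cover the whole cycle (their combined size is $2(m+1) - 1 = t$), which immediately yields the identity
\[
c_i + c_{i+m+1} \;=\; 2\,\mathbf{1}[i \in S] \;\in\; \{0, 2\}.
\]
Since the pair sums to $2$ only at indices $i \in S$, this identity is the engine of the argument.

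Next, I would count the number $a_{p, \calP}$ of approvers through the pairing $i \mapsto i + m + 1$. As $i$ ranges over the cycle each voter appears exactly twice among the entries of the pairs, giving $2\, a_{p, \calP} = \sum_{i} (\mathbf{1}[c_i > 0] + \mathbf{1}[c_{i+m+1} > 0])$. One then splits this sum by whether $i \in S$, exploiting that every $c_i$ inherits the parity of $k$. For $k$ odd each $i \notin S$ yields a pair summing to $0$ with both entries odd and nonzero, so exactly one of the two is positive; each $i \in S$ yields a pair summing to $2$ with both entries odd, and both entries can be positive only in the tight case $c_i = c_{i+m+1} = 1$. Letting $L$ denote the number of tight pairs, the bookkeeping gives $2\, a_{p, \calP} = (t - k) + k + L = t + L$.

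The final step is a parity squeeze: $a_{p, \calP}$ is an integer and $t$ is odd, so $L$ must itself be odd, and in particular $L \geq 1$; hence $a_{p, \calP} \geq (t+1)/2 > t/2$ and $b_{p, \calP} = L \geq 1$, so $p$ wins. The case $k$ even is handled symmetrically. For $i \in S$ the pair has both entries even and sums to $2$, which rules out both being positive (that would force the sum to be at least $4$), so the pair contributes exactly $1$ to $2\, a_{p, \calP}$. For $i \notin S$ the pair has both entries even and sums to $0$, contributing $1$ unless both entries vanish, in which case the contribution is $0$. Letting $L'$ be the number of $i \notin S$ with $c_i = c_{i+m+1} = 0$, one obtains $2\, a_{p, \calP} = t - L'$; the same parity argument forces $L' \geq 1$, so $a_{p, \calP} \leq (t-1)/2 < t/2$ and $p$ loses. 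The main obstacle is spotting the pairing $i \mapsto i + m + 1$ and deriving the identity $c_i + c_{i+m+1} = 2\,\mathbf{1}[i \in S]$; everything after that is an elementary case split followed by the parity squeeze.
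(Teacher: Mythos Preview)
Your proof is correct and rests on the same core identity as the paper's argument: with $g=m+1=\lfloor t/2\rfloor+1$, the arcs $V_i$ and $V_{i+g}$ overlap exactly in $\{i\}$ and cover the whole cycle, giving $c_i+c_{i+g}=2\cdot\mathbf{1}[i\in S]$ (equivalently $b_i+b_{i+g}=2p_i$ in the paper's $\{\pm1\}$ notation, since $b_i=2c_i-1$). From there the two proofs diverge in execution. The paper observes that $g$ generates $\mathbb{Z}_t$ and chains the relation $b_{ig}\equiv b_{(i+1)g}\pmod 4$ around the cycle to conclude that all $b_i$ share a common residue $c\in\{\pm1\}$ modulo~$4$; it then argues that whenever two paired signs agree they must both equal $c$, yielding $b_{p,\calP}=c\lvert S\rvert$ with $\lvert S\rvert$ odd. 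You bypass the generator and the mod-$4$ chain entirely by noting directly from the definition that every $c_i$ has the parity of $k$, and then splitting the pair-by-pair bookkeeping according to whether $i\in S$; the final parity squeeze on $L$ (resp.\ $L'$) plays exactly the role of the paper's implicit ``$\lvert S\rvert$ is odd'' step. Your route is a little more elementary and self-contained; the paper's route makes the uniform mod-$4$ structure of the balances explicit, which is a nice structural statement in its own right.
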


\begin{proof} For convenience, we will look at proposals as vectors in $\{\pm 1\}^t.$ Let $p \in \{\pm 1\}^t$ be an arbitrary proposal. For brevity, we write $b_i = b_{v_i, p}.$ Note that $b_i = \sum_{j \in v_i}p_j - \sum_{j \notin v_i}p_j.$ Observe that $b_i + b_{i + \floor*{t / 2} + 1}$ equals 
\begin{gather*}
     \sum_{j = i}^{i + \floor*{t / 2}}p_j - \sum_{j = \floor*{t / 2} + 1}^{i - 1} p_j + \sum_{j = i + \floor*{t / 2} + 1}^{i}p_j - \sum_{j = i + 1}^{i + \floor*{t / 2}} p_j = 2p_i
\end{gather*}
For ease of writing, write $g = \floor*{t / 2} + 1.$
Since $t$ is odd, $g$ is a generator in $\Z_t$; i.e.~$\{ig\}_{0 \leq i < t} = \Z_t.$ From the above we have that $b_{ig} + b_{(i + 1)g} = 2p_{ig}.$ Since $p_{ig} \in \{\pm 1\}$, it follows that $2p_{ig} \equiv 2 \Mod{4}.$ Since $b_{i}$ is odd for all $0 \leq i < t,$ the previous means that $b_{ig} \equiv b_{(i + 1)g} \Mod{4}.$ Since $g$ generates $\Z_t,$ this means that there is a number $c \in \{\pm 1\}$ such that $b_{i} \equiv c \Mod{4}.$ Observe that
\begin{gather*}
\sum_{i = 0}^{t - 1} b_i = \frac{1}{2}\left(\sum_{i = 0}^{t - 1} b_i + \sum_{i = 0}^{t - 1} b_{i + \floor*{t / 2} + 1}\right) = \frac{1}{2}\sum_{i = 0}^{t - 1}(b_i + b_{i + \floor*{t / 2} + 1}) = \sum_{i = 0}^{t - 1}p_i
\end{gather*}
This implies that $tc \equiv 2|p| - t \Mod{4},$ from which $t(c + 1) \equiv 2|p| \Mod{4},$ meaning that if $|p|$ is odd then $c = 1$ and if $|p|$ is even then $c = -1.$

Now, since $b_{ig} + b_{(i + 1)g} = 2p_{ig}$ and the two summands are odd, note that $b_{ig}$ and $b_{(i + 1)g}$ have opposite signs unless either they are both $1$ or both $-1$. Let $s_i$ denote the sign of $b_i$ for all $0 \leq i < t$. Observe that $b_{p, \calP} = \sum_{i = 0}^{t - 1}s_i$. Now, write $b_{p, \calP}$ as follows:
\begin{gather*}
b_{p, \calP} = \frac{1}{2}\left(\sum_{i = 0}^{t - 1}s_{ig} + \sum_{i = 0}^{t - 1}s_{(i + 1)g}\right) = \frac{1}{2}\sum_{i = 0}^{t - 1}(s_{ig} + s_{(i + 1)g})
\end{gather*}
The last expression equals $c|S|,$ where $S \subseteq \Z_t$ is the set of indices $i$ such that $s_{ig} = s_{(i + 1)g},$ as in all other cases the signs are opposite. Since $c$ is $\pm 1$ based on whether $|p|$ is even or odd, we get the conclusion.
\end{proof}

Observe that for $t = 7$ our construction exhibits the property that a non-losing proposal with $|p| = \floor*{t / 2} + 1$ does not exist, while non-losing proposals with higher $|p|$ do. This shows that, indeed, our polynomial algorithms are not guaranteed to output a proposal with $|p| = \floor*{t / 2} + 1,$ despite it being NP-hard to decide whether $|p| > \floor*{t / 2} + 1$ is possible, expanding on our earlier remark.

\end{document}

%%%%%%%%%%%%%%%%%%%%%%%%%%%%%%%%%%%%%%%%%%%%%%%%%%%%%%%%%%%%%%%%%%%%%%%%